\newcommand{\ee}[1]{\todo[color=green!20]{E: #1}}
\newcommand{\?}{?}
\newcommand{\val}{\ensuremath{\mathrm{val}}}
\newcommand{\dom}{\ensuremath{\mathrm{Dom}}}
\newcommand{\fd}{\ensuremath{\mathrm{FD}}}
\newcommand{\pd}{\ensuremath{\mathrm{PD}}}
\newcommand{\Oh}{\mathcal{O}}
\let\originalleft\left
\let\originalright\right
\renewcommand{\left}{\mathopen{}\mathclose\bgroup\originalleft}
\renewcommand{\right}{\aftergroup\egroup\originalright}
\renewcommand{\epsilon}{\varepsilon}
\newenvironment{cProof}[1][Proof of Claim]{\begin{proof}[#1]}{\end{proof}}
\theoremstyle{plain}
\newtheorem{theorem}{Theorem}
\newtheorem{lemma}[theorem]{Lemma}
\newtheorem{observation}[theorem]{Observation}
\newtheorem{claim}{Claim}[theorem]
\newtheorem{definition}[theorem]{Definition}
\title{EPTAS for $k$-means Clustering of Affine Subspaces}
\author[1]{Eduard Eiben}
\author[2]{Fedor V. Fomin\thanks{Received fundings from Research Council of Norway via the projects
“MULTIVAL” (grant no. 263317)}}
\author[2]{Petr A. Golovach \protect\footnotemark[1]}
\author[2]{Willian Lochet\thanks{Received funding from the European Research Council (ERC) under the European Union's Horizon 2020 research and innovation programme (grant agreement No 819416). \\\protect\includegraphics[width=3cm]{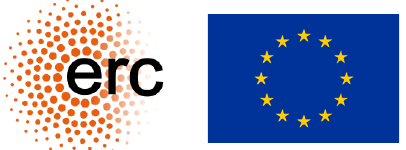}}}
\author[3]{Fahad Panolan
\thanks{Supported by Seed grant, IIT Hyderabad (SG/IITH/F224/2020-21/SG-79).}
}
\author[2]{Kirill Simonov \protect\footnotemark[1]}
\affil[1]{{\small Department of Computer Science, Royal Holloway University of London, Egham, UK}}
\affil[2]{{\small Department of Informatics, University of Bergen, Norway}}
\affil[3]{{\small Department of Computer Science and Engineering, IIT Hyderabad, India}}
\date{} 
\begin{document}
\maketitle

\begin{abstract}
We consider a generalization of the  fundamental $k$-means clustering    for 
 data with incomplete or corrupted entries. When data objects are represented by points in  $\mathbb{R}^d$, a data point is said to be incomplete when some of its entries are missing or unspecified.
An incomplete data point  with at most $\Delta$ unspecified entries corresponds to an axis-parallel affine subspace of dimension at most $\Delta$, called a $\Delta$-point. Thus we seek a partition of $n$ input $\Delta$-points into $k$ clusters minimizing the $k$-means objective. For   $\Delta=0$, when all coordinates of each point are specified, this is the usual $k$-means clustering. 
We give an algorithm that  finds an  $(1+ \epsilon)$-approximate solution in time $f(k,\epsilon, \Delta) \cdot n^2 \cdot  d$ for some function $f$ of $k,\epsilon$, and $\Delta$ only.  

\end{abstract}

\section{Introduction}

Clustering is one of the most widely used techniques in data mining, statistics, and machine learning. 
In general,  the purpose of clustering is to group a set of objects such that similar objects end up in the same cluster. A common approach to clustering is to treat objects with $d$ features as points in $\mathbb{R}^d$. The measure of the similarity between two objects is the Euclidian distance between the corresponding points. 
One of the most famous mathematical models of data clustering is  $k$-means. 
In $k$-means clustering, we want to partition the points in $\mathbb{R}^d$, or some other metric space, by selecting 
 a set of $k$ centers and assign each of the points to its closest center. 
 The quality of the clustering solution is characterized by the $k$-means cost function, which minimizes the sum of squared distances between every point and its nearest center.  
 
It is a common occurrence in practical applications that some features of data can be missing or unspecified. Since missing data could significantly affect the information retrieved from the data, handling such data is a pervasive challenge. 
Various heuristic, greedy, convex optimization, statistical, or even ad hoc methods were proposed throughout the years in different practical domains to handle missing data. We refer to books 
\cite{allison2001missing}, \cite{VidalMS16}, and  the Wikipedia entry\footnote{\url{https://en.wikipedia.org/wiki/Missing_data}} for an introduction to the topic.

 Gao, Langberg, and  Schulman in~\cite{GaoLS08} proposed the following geometric approach to the clustering of incomplete data. 
 A data object that misses $ \Delta $ entries corresponds to a $ \Delta $-dimensional affine subspace in  $\mathbb{R}^d$.  This subspace is parallel to coordinate axes corresponding to the missing coordinates. We call such affine subspaces \emph{$\Delta$-points}.
 With this notation, a regular point in $\mathbb{R}^d$ is a $0$-point.
The distance between a $\Delta$-point $x$ and a point $y$ is naturally defined as the minimum distance between $y$ and a point from $x$.  In this setting, the classical $k$-means and other clustering problems like $k$-median and $k$-center, can be defined on a set of  $ \Delta $-points. The only difference is that we minimize the corresponding objective function based on the distances between the center of the cluster and the $\Delta$-points from the cluster. 
Gao et al.'s geometric model has the following explanation: It provides the values of missing entries that are most suitable for clustering objective. In particular,  under the assumption that the set of ``complete'' data objects is well-clustered, this approach yields the correct clustering.

 From the computational perspective, the $\Delta$-point clustering models are way more challenging than their vanilla clustering counterparts.
Most of the clustering algorithms crucially exploit the fact that clustering occurs in a metric space.  
The  major obstacle of using these algorithms for the more general clustering problem is that 
  the ``distances'' between $\Delta$-points do not satisfy the triangle inequality. 
As    Gao et al.~\cite{GaoLS08} wrote: 
  ``\emph{This problem defeats many existing algorithmic approaches for “clustering”-type tasks, and for good reason---the geometry seems, in a genuine sense, to be absent.}''
  
  While the definition of clustering of Gao et al.~\cite{GaoLS08} is applicable to $k$-center, $k$-means, and $k$-median versions of clustering, most of the work in this direction concentrated  on $k$-center. (In $k$-center clustering the objective is to minimize the distance $r$ such that every point is within distance $r$ from at least one of the $k$ centers.) 
  Only very recently the first approximation algorithm for $k$-means $\Delta$-point clustering was given by Marom and Feldman \cite{MaromF19} for the special case of $\Delta =1$. 
  We  discuss in details the literature  relevant to our work in the next subsection.

The main result of our paper is the following theorem, which is  the first step in the study of the computational complexity of $k$-means $\Delta$-point clustering beyond $\Delta \in\{0,1\}$. 

\begin{restatable}{theorem}{mainTheorem}
	\label{th:kmean_FPTAS}
The problem of $k$-means clustering of $\Delta$-points in $\mathbb{R}^d$ admits an $(1+ \epsilon)$-approximation algorithm with running time ${2^{\Oh(\frac{\Delta^7k^3}{\epsilon}\log\frac{k\Delta}{\epsilon})}} n^2d$.
\end{restatable}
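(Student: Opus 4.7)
The plan is to adapt the Kumar--Sabharwal--Sen sampling-based EPTAS for classical $k$-means to the $\Delta$-point setting. The starting observation is that once a clustering $C_1, \ldots, C_k$ is fixed, the cost decomposes coordinate-wise: writing $a_x$ for the anchor of $\Delta$-point $x$ and $D_x$ for its set of free coordinates (so $|D_x| \leq \Delta$), the cost of cluster $C_j$ with center $c \in \mathbb{R}^d$ is $\sum_{x \in C_j} \sum_{i \notin D_x} (c_i - a_{x,i})^2$. The optimal center $c_j^*$ thus has $i$-th coordinate equal to the mean of $a_{x,i}$ over the subset $C_j^i := \{x \in C_j : i \notin D_x\}$, and the contribution of coordinate $i$ is the $1$-dimensional variance of those values. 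Crucially, the standard fact that uniformly averaging $\Oh(1/\epsilon)$ random points from $C_j$ gives a $(1+\epsilon)$-estimator of the centroid does not apply directly, because different coordinates are averaged over different subsets.

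To cope with this, I would distinguish, for each cluster $C_j$ and coordinate $i$, whether $m_i(C_j) := |C_j^i|$ is \emph{dense} ($m_i(C_j) \geq \alpha |C_j|$) or \emph{sparse} ($m_i(C_j) < \alpha |C_j|$), for a threshold $\alpha$ of order $1/\Delta$ to be tuned. A simple counting using $\sum_i (|C_j| - m_i(C_j)) = \sum_{x \in C_j} |D_x| \leq \Delta |C_j|$ bounds the number of $\alpha$-sparse coordinates in $C_j$ by $\Delta/(1-\alpha)$, so each cluster has $\Oh(\Delta)$ sparse coordinates and at most $\Oh(\Delta k)$ are sparse globally. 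For dense coordinates, a uniform sample of size $s = \Oh(\frac{1}{\alpha\epsilon} \log(\Delta k))$ drawn from $C_j$ contains $\Omega(1/\epsilon)$ representatives with $i$ specified for every dense $i$ with high probability, and the sample mean on the $i$-th coordinate yields the needed $(1+\epsilon)$-approximation of $c^*_{j,i}$ by the variance-of-sample-mean argument applied coordinate-by-coordinate.

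For the sparse coordinates the sampling breaks down, and handling them is the main obstacle. My plan is to enumerate: for each cluster $C_j$, guess the identities of its $\Oh(\Delta)$ sparse coordinates, and for each such coordinate $i$, guess a small subset $T \subseteq \{x : i \notin D_x\}$ from which to compute an estimator of $c^*_{j,i}$. To keep this tractable, I would wrap everything in a Kumar--Sabharwal--Sen-style recursion: draw a super-sample from the input, branch over all partitions of its relevant subsets of appropriate size into the $k$ clusters, select a candidate center for the currently-largest cluster by combining a dense-coordinate estimator with the enumerated sparse-coordinate corrections, assign the nearest points, remove them, and recurse on the remaining $k-1$ clusters. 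The analysis must then show (i) that the recursion reaches depth $k$ with only $(k\Delta/\epsilon)^{\mathrm{poly}(\Delta,k,1/\epsilon)}$ branches along the way, and (ii) that the per-step multiplicative errors compose to $1 + \Oh(\epsilon)$. The principal technical difficulty I anticipate is proving a robust sampling lemma for sparse coordinates: since the clustering is unknown in advance, the enumerated estimators must be computed from data alone, without reference to the unknown cluster assignment. This is what I expect to inflate the exponent to the $\Delta^7 k^3 / \epsilon$ range claimed in the statement.
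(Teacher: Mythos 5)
Your setup is on the right track: the coordinate-wise decomposition of the cost, the observation that each cluster has at most $\Oh(\Delta)$ ``sparse'' coordinates, and the plan to learn dense coordinates by uniform sampling inside a Kumar--Sabharwal--Sen-style recursion all match the structure of the paper's argument. However, your treatment of the sparse coordinates is exactly the step the paper identifies as unaffordable and works hardest to avoid: you propose to ``guess the identities of the $\Oh(\Delta)$ sparse coordinates'' of each cluster, but this enumeration costs $\binom{d}{\Oh(\Delta)} = d^{\Oh(\Delta)}$ per cluster, which destroys the claimed running time $f(k,\epsilon,\Delta)\, n^2 d$ (the paper states explicitly that guessing this set ``would add a factor $\binom{d}{\Delta}$ to the running time, which we can not afford''). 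Similarly, ``guessing a small subset $T$'' of points with coordinate $i$ specified costs $n^{|T|}$ unless it is replaced by sampling, and sampling is precisely what fails for sparse coordinates. The paper's way out is not enumeration: it proves (via the machinery of $t$-useless sets of coordinates and compatible/safe extensions in Section~3) that for each problematic pair of clusters, either the offending coordinates are ``useless''---one may copy the other center's values there at a $(1+\Oh(\alpha))$ loss---or there is an identifiable ball $B$ around the known partial center whose removal makes the informative points a constant fraction of the remaining pool, and the unknown radius is handled by guessing one of $\log n$ geometric scales rather than a set of coordinates.

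The second gap is that your recursion step ``select a center, assign the nearest points, remove them, and recurse'' silently imports the ball-separation argument of Kumar et al., which relies on the triangle inequality and is false for $\Delta$-points: a point of another cluster can be arbitrarily close to the current approximate center once $\?$-coordinates are allowed, and, conversely, points of the current cluster that are fully defined on the already-learned coordinates carry no information about the coordinates still to be learned and must be excluded from the sampling pool. Making the ball argument work after deleting the $\Delta$ worst coordinates from the distance computation---without knowing which coordinates those are, and while keeping control of how the (no longer optimal) extension and its associated centers drift as points are reassigned---is the main technical content of the paper (Lemmas~\ref{lem:balls_distance}--\ref{lem:partial_assignment_induc}) and is absent from your proposal. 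As written, the proposal therefore does not yield the stated theorem: the enumeration branch gives at best a running time with a $d^{\Oh(\Delta k)}$ factor, and the correctness of the pool-cleaning step is unproven.
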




\subsection{Related Work}
The study of 
$k$-means clustering dates back to 1980s \cite{Lloyd82}. The problem is NP-hard even for $k=2$~\cite{AloiseDHP09} and development of approximation algorithms for $k$-means is an active research directions for many years~\cite{DBLP:conf/soda/FeldmanSS13,DBLP:conf/stoc/BadoiuHI02,har2004coresets,DBLP:conf/focs/AhmadianNSW17,KumarSS10,ArthurV07,Cohen-Addad18}. From this list of literature,    the paper of Kumar, Sabharwal,  and   Sen~\cite{KumarSS10} is the most relevant to our work. In that paper,  Kumar et al. gave an $(1+\epsilon)$-approximation  algorithm for $k$-means that runs in time $2^{(k/\epsilon)^{\Oh(1)}} n d$.  Theorem~\ref{th:kmean_FPTAS} provides an extension of the algorithm of 
Kumar et al. to clustering of $\Delta$-points.

Clustering of  $\Delta$-points was defined by  Gao, Langberg, and  Schulman~\cite{GaoLS08}. (They call $\Delta$-points axis-parallel $\Delta$-flats.) In ~\cite{GaoLS08} and the consecutive work~\cite{GaoLS10}, Gao et al. developed several constant factor approximation algorithms for $k$-center clustering of $\Delta$-points and lines.  Lee and Schulman  \cite{LeeS13} gave a $(1+\epsilon)$-approximation algorithm for $k$-center $\Delta$-point clustering that runs in time $2^{\Oh(\Delta k \log k(1+1/\epsilon^2))}nd$.  On the negative side, they show that even if one of $k$ or $\Delta$ (but not both) is a fixed constant greater that $3$, it is NP-hard to decide whether there is a $k$-center $\Delta$-point clustering of value $0$. This implies that there is no approximation algorithm running in time polynomial in $n+d+k$, respectively polynomial in $n+d+\Delta$, for any approximation factor for $k$-center as well as $k$-median and $k$-means clustering of $\Delta$-points. Eiben et al. \cite{Eiben19} provide a thorough study of different variants of $k$-center with incomplete information.

A number of results on $k$-means and $k$-median clustering of lines can be found in the literature. 
 Ommer and  Malik \cite{ommer2009multi} studied $k$-median clustering of lines in $\mathbb{R}^3$. Their algorithm does not have any approximation guarantee and can run for unbounded time. 
 Perets~\cite{perets2011clustering} gave an algorithm that in time $n (\log n/\epsilon)^{\Oh(k)} d $ finds a $(1+\epsilon)$-approximate solution for $k$-median line clustering in $\mathbb{R}^2$. 
Finally, Marom and Feldman in \cite{MaromF19} gave  the first PTAS for $k$-means clustering of lines by providing an  $(1+\epsilon)$-approximation algorithm of running time $f(k, d,\epsilon)n\log{n}$.  The algorithm of Marom and Feldman follows from the construction of a coreset of size 
$dk^{\Oh(k)} \log n/\epsilon^2$. Comparing Theorem~\ref{th:kmean_FPTAS} with the result of Marom and Feldman, since every $1$-point is a line,  their result  implies a  
 PTAS for $k$-means clustering of $\Delta$-points for  $\Delta=1$. However, Theorem~\ref{th:kmean_FPTAS} implies PTAS  only for axis-parallel lines. To the best of our knowledge, no approximation algorithm was known for $k$-means for $\Delta>1$. 
 
\subsection{Overview of the Algorithm}
In order to describe our algorithm, let us recall roughly the argument of Kumar et al.~\cite{KumarSS10} for the case when $\Delta = 0$ and $k =2$. Let $P$ denote the set of points in the instance, let $(P_1, P_2)$ be an optimal partition of $P$ such that $|P_1| \geq |P_2|$, and let $(c_1, c_2)$ be the optimal cluster centers for this partition. The algorithm starts by picking at random some $s = s(\epsilon)$ points $S \subseteq P$. Because $|P_1| \geq |P_2|$, it means that with constant probability, all these points belong to $P_1$ and the center $c'_1$ of $S$ gives a good approximation of $c_1$. Once this is achieved, the algorithm tries to sample inside $P_2$ in order to get an approximation of $c_2$. Because $|P_1|$ can be very large compared to $|P_2|$, the algorithm needs to remove some elements of $P$ from the sampling pool. What Kumar et al. show is that, if $t$ denote the distance between $c_1$ and $c_2$, then the ball $B$ of radius $t/4$ around $c'_1$  contains only elements of $P_1$. Moreover, they show that either $P_2$ is large compared to $P_1 - B$ or the solution containing only one cluster with center $c'_1$ is a good enough approximation. Therefore, by guessing an approximation of $B$, the algorithm is able to sample inside $P_2$ with constant probability and thus obtain a good estimate $c'_2$ for $c_2$. 

Let us now explain some of the difficulties encountered while trying to generalize this argument to $\Delta$-points. Let $P$ denote an instance of $2$-clustering with $\Delta$-points and let $(P_1, P_2)$ be an optimal clustering with centers $(c_1, c_2)$. The first problem we encounter is that sampling elements of $P_1$ might not give a good approximation for $c_1$. Indeed, suppose, for example, that for almost all the points in $P_1$, the first coordinate value is missing. In that case, almost surely, a constant number of randomly sampled elements of $P_1$ will be such that for all of them the value of the first coordinate is missing,
and we get no information about the value of $c_1$ on this coordinate. However, we can show that the number of such ``bad'' coordinates is at most $ \Delta $. This means that we can obtain a good approximation of $c_1$ on some set of coordinates $I_1$ such that $|I_1| \geq d - \Delta$. Let us call this approximation $u_1$. Moreover, a large portion of the elements of $P_1$ will have all their missing values outside of $I_1$. For these points, $u_1$ contains all the information necessary to decide whether they belong to $P_1$ or not. So the algorithm will then guess these points and remove them from the sampling pool (we will come back to how exactly this is done later).
Afterward, either $P_2$ is large enough so that we can sample inside this set with good probability, or
what remains of $P_1$ is still larger than $P_2$. In the last case, we can sample in $P_1$ to obtain information on the value of $c_1$ outside of $I_1$.
The first time we sample inside $P_1$ we get information about $d - \Delta$ coordinates, and with each of the subsequent set of samples, learn at least one new coordinate. Hence we only have to do $2 (\Delta +1)$ sampling steps to obtain an estimate of all the coordinates of $c_1$ and $c_2$.

However, the major problem that we face is how to find the elements of $P_1$ with missing values outside of $I_1$ to remove from the sampling pool. The triangular inequality does not hold for $\Delta$-points, and in particular, if $t$ is the distance between $c_1$ and $c_2$, it is not true that the ball of radius $t/4$ around $c_1$ does not contain any point of $P_2$. What we are able to prove is that the above holds if we exclude a certain small set of coordinates from the computation.
Namely, if $I_{1,2}$ is the set of indices obtained from $[d]$ by removing the $\Delta$ indices that maximize $|(c_1)_r - (c_2)_r|$ over $r \in [d]$, then for $t' = d^{I_{1,2}}(c_1, c_2)$ (\emph{i.e.}, the distance between $c_1$ and $c_2$ when considering only coordinates in $I_{1,2}$), no point of $P_2$ is at distance less than $t'/4$ from $c_1$. Moreover, we can also show that either the ball of radius $t'/4$ around $c_1$ removes enough points from the sampling pool, or the coordinates of $I_{1,2}$ are ``useless'', meaning that we can set the two centers to be equal on these coordinate and still obtain a good approximation.  
The main difficulty here is that we do not know the coordinates $I_{1,2}$ and guessing this set would add a factor $\binom{d}{\Delta}$ to the running time, which we can not afford. We will show how to deal with this problem in Section~\ref{sec:cleaning_sample_pool}, which is the main technical part of our proof. 

\section{Preliminaries}
Throught the paper, we use $[n]$ to denote the set $\{1, \ldots, n\}$ for any integer $n$, and $\mathbb{R}^+$ to denote the set of positive real numbers. For two sets $H$ and $F$ of elements of a universe $\mathcal{U}$, we write $F-H$ for $ F \setminus H$.

\subsection{Points with Missing Coordinates}
 As explained, the goal of the paper is to study clustering of points in $\mathbb{R}^d$ with missing entries in some coordinates. Let us denote the missing entry value by \?
and let $\mathbb{H}^d$ denote the set of elements of $\mathbb{R}^d$ where we allow some coordinates to take the value \?. We say that a point $x=( x_1, \dots, x_d)$ in $\mathbb{H}^d$ is a $\Delta$-point, if at most $\Delta$ of the coordinates  $x_i$ of $x$ have value \?. 

\begin{definition}[\textbf{Domain of a  point}]
 For an element $x \in \mathbb{H}^d$, we call the \emph{domain} of $x$, denoted by $\dom(x)$, the set of coordinates $i \in [d]$ such that $x_i \ne\  \?$. 
\end{definition}

\begin{definition}[\textbf{\fd\, and \pd\, points}]
For a set $S$ of elements of $\mathbb{H}^d$ and a set $I$ of indices in $[d]$, corresponding to coordinates, let $\fd(S,I)$ denote the set of points in $S$ that are \emph{fully defined on $I$}, \emph{i.e.} $x \in S$ such that $\dom(x) \subseteq I$. Formally, \[\fd(S,I)= \{x\in S\, \mid\,  \dom(x) \subseteq I\}.\] By $\pd(S,I)$, we denote the set of points in $S$ that are \emph{partially defined on $I$}, \emph{i.e.} $x \in S$ such that $\dom(x) \cap I \not = \emptyset$. Formally, \[\pd(S,I)= \{x\in S\, \mid\,  \dom(x) \cap I\neq \emptyset\}.\] 
\end{definition}

 With a slight abuse of notation, in all the definitions here and next that concern a particular set of indices $I \subset [d]$, we might use $i \in [d]$ instead of $\{i\}$, \emph{e.g.}, $\pd(S, i) = \pd(S, \{i\})$.

For elements $x,y\in \mathbb{H}^d$ and a set of coordinates in $I\subseteq [d]$, we define \[d^I(x,y) = \sqrt{ \sum_{i \in I} (x_i - y_i)^2 },\] where by convention if either $y = \?$ or $x = \?$, then $(x- y)^2=0$. When $I = [d]$, we let $d(x, y) =d^I(x,y)$. Note that $d(x, y)$  corresponds to the standard Euclidean distance when $x$ and $y$ are elements of $\mathbb{R}^d$.  
For a set $P$ of elements of $\mathbb{H}^d$ and a set $I$ of coordinates, we denote by $c^I(P)$
the ``mean'' of $P$ on the coordinates of $I$. That is, $c^I(P)$ is the element of $\mathbb{H}^d$ such that {for every } $i \in I$,
\begin{equation*}
 c^I(P)_i =
\begin{cases}
\text{\?} & \text{ if $\pd(P, i)$ is empty,}\\
\frac{\sum_{x \in \pd(P, i)}x_i}{|\pd(P, i)|} & \text{otherwise}.\\
\end{cases}       
\end{equation*}

When $I$ contains all elements of $[d]$, we let $c(P) = c^I(P)$. For an element $y \in \mathbb{H}^d$, a set $X$ of elements of $\mathbb{H}^d$ and a set $I$ of coordinates in $[d]$, let us define \[f^{I}_2(X,y) = \sum_{x \in X}d^I(x,y)^2.\] Note that if $I_1$ and $I_2$ are disjoint sets of coordinates, then $f^{I_2  \cup I_1}_2(X,y) = f^{I_1}_2(X,y) + f^{I_2}_2(X,y)$. For $I=[d]$, we write $f_2(X,y)=f^{I}_2(X,y)$.

 For a set $(P_i)_{i \in [k]}$ of subsets of $\mathbb{H}^d$ and a set of points $(c_i)_{i \in [k]}$, we set 
 \[\val\left((P_i)_{i \in [k]},(c_i)_{i \in [k]} \right) = \sum_{i \in [k]}f_2(P_i, c_i).\]

\begin{lemma}\label{lem:approx_center_1}
    For every point $x \in \mathbb{H}^d$, set of points $P\subseteq \mathbb{H}^d$, and set of coordinates $I\subseteq [d]$, it holds that
    \[ f^{I}_2(P,x) = f^{I}_2(P,c(P)) + \sum_{i \in I} |\pd(P,i)| (x_i - c(P)_i)^2. \]

    In particular,
    \[ f^I_2(P,x) \leq f^I_2(P,c(P)) + |\pd(P,I)| d(x, c(P))^2.\]
\end{lemma}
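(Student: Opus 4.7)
The plan is to prove this as the standard Huygens--Steiner (parallel-axis) identity for squared distances, adapted coordinate-wise to tolerate the $\?$ convention. The core observation is that $f^I_2$ splits additively across coordinates, and on each coordinate only the entries from $\pd(P,i)$ actually contribute, so the classical one-dimensional variance decomposition can be invoked separately on each coordinate.

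First I would expand $f^I_2(P,x)$ using the definition of $d^I$ and swap the order of summation:
\[ f^I_2(P,x) \;=\; \sum_{p\in P}\sum_{i\in I}(p_i-x_i)^2 \;=\; \sum_{i\in I}\sum_{p\in P}(p_i-x_i)^2. \]
By the $\?$ convention, any $p\in P$ with $p_i=\?$ contributes $0$ to the inner sum, so the inner sum equals $\sum_{p\in \pd(P,i)}(p_i-x_i)^2$. For coordinates $i$ with $\pd(P,i)=\emptyset$, every term on both sides of the claimed identity is zero (since then $c(P)_i=\?$, $p_i=\?$ for all $p\in P$, and $|\pd(P,i)|=0$), so such coordinates can be ignored.

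Next, for each $i\in I$ with $\pd(P,i)\neq\emptyset$, I would apply the classical one-dimensional identity to the real numbers $\{p_i\}_{p\in\pd(P,i)}$ with mean $m_i=c(P)_i=\frac{1}{|\pd(P,i)|}\sum_{p\in\pd(P,i)}p_i$:
\[ \sum_{p\in\pd(P,i)}(p_i-x_i)^2 \;=\; \sum_{p\in\pd(P,i)}(p_i-m_i)^2 \;+\; |\pd(P,i)|\,(x_i-m_i)^2, \]
which is just expanding $(p_i-x_i)^2=((p_i-m_i)+(m_i-x_i))^2$ and using $\sum_{p\in\pd(P,i)}(p_i-m_i)=0$. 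Summing these identities over $i\in I$ and re-collecting gives exactly the stated equality.

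For the ``in particular'' inequality, I would simply observe that $\pd(P,i)\subseteq \pd(P,I)$ for every $i\in I$, so $|\pd(P,i)|\leq |\pd(P,I)|$, and thus
\[ \sum_{i\in I}|\pd(P,i)|\,(x_i-c(P)_i)^2 \;\leq\; |\pd(P,I)|\sum_{i\in I}(x_i-c(P)_i)^2 \;\leq\; |\pd(P,I)|\,d(x,c(P))^2, \]
where the last step extends the sum from $I$ to $[d]$ (non-negative terms) and uses the $\?$ convention on missing coordinates. The only mildly delicate point is the bookkeeping around coordinates where $c(P)_i=\?$; this is handled uniformly by the convention and by isolating the $\pd(P,i)=\emptyset$ case as above. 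Otherwise the argument is routine.
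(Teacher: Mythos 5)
Your proof is correct and follows essentially the same route as the paper's: decompose $f^I_2$ coordinate-by-coordinate, apply the one-dimensional bias--variance identity on $\pd(P,i)$ using $\sum_{v\in\pd(P,i)}(v_i-c(P)_i)=0$, and sum over $i\in I$. Your explicit treatment of the $\pd(P,i)=\emptyset$ case and of the ``in particular'' inequality is slightly more detailed than the paper's, but the substance is identical.
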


\begin{proof}
For an index $i\in I$, we have 
    \begin{align*}
        f^i_2(P,x) &= \sum_{v \in \pd(P,i)} (v_i - x_i)^2\\
                &= \sum_{v \in \pd(P,i)}(v_i - c(P)_i + c(P)_i - x_i)^2\\
                &= \sum_{v \in \pd(P,i)} (v_i - c(P)_i)^2 + \sum_{v \in \pd(P,i)} (x_i - c(P)_i)^2,
    \end{align*}
    because $\sum_{v \in \pd(P,i)} (v_i - c(P)_i) = 0$ by definition of $c(P)$. This means that $f^i_2(P,x) = f^i_2(P,c^i(P)) + |\pd(P,i)|(x_i - c(P)_i)^2$. We conclude by summing over all $i \in I$.  
\end{proof}

\subsection{$k$-means for $\Delta$-points}
Let us define the {\sc $k$-means clustering} problem for $\Delta$-points. Given an instance $P$ of $n$ $\Delta$-points in $\mathbb{H}^d$, the task  is to partition $P$ into $k$ sets $(P_1, \dots, P_k)$, which we will refer to as \textit{clusters}. A solution also consists of a set of centers $(c_1, \dots, c_k)$ and the objective is to minimize $\sum_{i \in [k]}f_2(P_i, c_i)$. Note that, by Lemma \ref{lem:approx_center_1}, for a given cluster $P_i$ the optimal center is exactly $c(P_i)$, and we can equivalently minimize the objective value $\sum_{i \in [k]}f_2(P_i, c(P_i))$ over all partitions $(P_1, \ldots, P_k)$. Furthermore, notice that if $\dom(x)=\emptyset$, then $x$ always contributes zero to $\sum_{i \in [k]}f_2(P_i, c_i)$, so we can assume that $\dom(x)\neq \emptyset$ for all $x\in P$, and, consequently, $\Delta<d$.

From now on we fix an instance of the {\sc $k$-means clustering} problem, and denote by $P$ the corresponding set of $\Delta$-points in $\mathbb{H}^d$.

\paragraph{Partial clustering.}
Suppose $(P_1, \dots, P_k)$ together with centers $(c_1, \dots, c_k)$ is an optimal solution. As explained earlier, the goal of our algorithm is to discover the centers of the clusters step by step, while assigning some elements of $P$ to some clusters. For this purpose we define the notion of  \textit{partial clustering}. 
We say that  integers $(n_1, \dots, n_k)$, sets $(H_1, \dots, H_k)$, $(I_1, \dots,  I_k)$ and points $(u_1, \dots, u_k)$ form a \textit{partial clustering} if for every $i \in [k]$:   
\begin{itemize}
    \item $H_i$ is a set of elements of $P$,
    \item $n_i$ is an integer in $[\Delta +1]$,
    \item $I_i$ is a set of indices in $[d]$, 
    \item if $n_i > 0$, then $|I_i| \geq d - \Delta + (n_i - 1)$, 
    \item if $n_i = 0$, then $I_i$ and $H_i$ are empty, and
    \item $u_i$ is a point of $\mathbb{H}^d$ such that $\dom(u_i) = I_i$.
\end{itemize}  

Intuitively, for every $i \in [k]$, $H_i$ is a set of points which are already assigned to the cluster $i$, $u_i$ is a partially discovered center of the cluster, and $I_i$ represents the coordinates where $u_i$ is already specified. As we will incur some error each time we are performing a sampling step, the values $n_i$ represent the number of sampling steps that has been done for guessing $u_i$ on $I_i$, and the fact that $I_i \geq d - \Delta + (n_i - 1)$ is used to show that the number of sampling steps performed before reaching a point where $I_i = [d]$ for each $i$ is just $\Oh(\Delta \cdot k)$. Let $R$ denote the set $P - (\bigcup_{i \in [k]} H_i)$ of the points in $P$ that are not yet assigned to a cluster. 

For a partial clustering $\mathcal{P} = \{ (n_i)_{i \in [k]}, (H_i)_{i \in [k]}, (I_i)_{i \in [k]}, (u_i)_{i \in [k]} \}$, an \textit{extension} is a partition $(P'_i)_{i \in [k]}$ of the elements of $P$ such that for every $i \in [k]$, $H_i \subseteq P'_i$. We say that the points $(c'_i)_{i \in [k]}$ are the centers \textit{associated} with $(P'_i)_{i \in [k]}$

if for every $i \in [k]$, $c'_i$ and $u_i$ are equal on the coordinates of $I_i$ and $c'_i$ is equal to $c^{[d] - I_i}(P'_i)$ on $[d] - I_i$.  
The \textit{value} of an extension $(P'_i)_{i \in [k]}$ with associated centers $(c'_i)_{i \in [k]}$, denoted $\val\left((P'_i)_{i \in [k]}\right)$, is equal to $\val\left((P'_i)_{i \in [k]},(c'_i)_{i \in [k]} \right)$. The \textit{value} of a partial clustering $\mathcal{P}$, denoted $OPT(\mathcal{P})$, is the minimum value of an extension $(P'_i)_{i \in [k]}$ of $\mathcal{P}$. We call the extension optimizing this value \textit{optimal}.

\begin{observation}\label{obs:1}
    Let $\mathcal{P} = \{ (n_i)_{i \in [k]}, (H_i)_{i \in [k]}, (I_i)_{i \in [k]}, (u_i)_{i \in [k]}  \}$ be a partial clustering and $x \in R$ such that $\dom(x) \subseteq I_i$ for all $i \in [k]$ and $f_2(x, u_j)$ is minimal among all the $f_2(x, u_i)$. The partial clustering obtained from $\mathcal{P}$ by putting $x$ in the set $H_j$ has the same value as $\mathcal{P}$.
\end{observation}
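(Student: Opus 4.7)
My plan is to prove equality of the two partial clusterings' optimal values by showing both inequalities. Let $\mathcal{P}'$ denote the partial clustering obtained from $\mathcal{P}$ by adding $x$ to $H_j$. The direction $OPT(\mathcal{P}) \leq OPT(\mathcal{P}')$ is immediate from the definition: every extension of $\mathcal{P}'$ satisfies $H_i \subseteq P'_i$ for all $i \in [k]$ (with the enlarged $H_j$), and therefore automatically satisfies the corresponding condition for the original $\mathcal{P}$, so extensions of $\mathcal{P}'$ form a subset of extensions of $\mathcal{P}$.

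For the reverse direction $OPT(\mathcal{P}') \leq OPT(\mathcal{P})$, I would take an optimal extension $(P'_i)_{i \in [k]}$ of $\mathcal{P}$ with associated centers $(c'_i)_{i \in [k]}$, and let $l$ be such that $x \in P'_l$. If $l = j$, the extension already witnesses $OPT(\mathcal{P}') \leq \val((P'_i)_{i \in [k]})$. Otherwise, I construct $(P''_i)_{i \in [k]}$ by setting $P''_j = P'_j \cup \{x\}$, $P''_l = P'_l - \{x\}$, and $P''_i = P'_i$ for $i \notin \{j,l\}$. This is a valid extension of $\mathcal{P}'$, and the goal is to show $\val((P''_i)_{i \in [k]}) \leq \val((P'_i)_{i \in [k]})$.

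The key observation is that because $\dom(x) \subseteq I_j \cap I_l$, the point $x$ contributes nothing to the coordinates in $[d] - I_j$ or $[d] - I_l$ (the coordinates on which it equals $\?$). Hence $c^{[d] - I_l}(P''_l) = c^{[d] - I_l}(P'_l)$ and $c^{[d] - I_j}(P''_j) = c^{[d] - I_j}(P'_j)$, so the associated centers satisfy $c''_i = c'_i$ for all $i \in [k]$. Consequently the only change in the value comes from moving $x$'s contribution from cluster $l$ to cluster $j$. Using $\dom(x) \subseteq I_l$ and the fact that $c'_l = u_l$ on $I_l$, while $u_l = \?$ on $[d] - I_l$, one obtains $d(x, c'_l)^2 = \sum_{i \in \dom(x)} (x_i - u_{l,i})^2 = f_2(x, u_l)$, and symmetrically $d(x, c'_j)^2 = f_2(x, u_j)$.

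Thus $\val((P''_i)_{i \in [k]}) - \val((P'_i)_{i \in [k]}) = f_2(x, u_j) - f_2(x, u_l) \leq 0$ by the hypothesis that $f_2(x, u_j)$ is minimal. Combining both directions yields $OPT(\mathcal{P}') = OPT(\mathcal{P})$. I do not anticipate a real obstacle here; the only delicate point is to verify carefully that the missing-coordinate convention ($(x_i - y_i)^2 = 0$ whenever either entry is $\?$) implies both the invariance of the cluster centers under the swap and the identity $d(x, c'_i)^2 = f_2(x, u_i)$ for $i \in \{j,l\}$.
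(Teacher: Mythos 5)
Your proof is correct and is essentially an expanded version of the paper's own one-line argument, whose key fact is that for any extension with associated centers $(c'_i)_{i\in[k]}$ one has $f_2(x,c'_i)=f_2(x,u_i)$ for every $i$ (since $\dom(x)\subseteq I_i$, where $c'_i$ agrees with $u_i$, and $x$ contributes nothing to the means on $[d]-I_i$). Your additional care in checking that the associated centers are unchanged when $x$ is moved between clusters is exactly the detail the paper leaves implicit, so the two arguments coincide.
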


\begin{proof}
    It follows from the fact that, for any extension with associated centers $(c'_i)_{i\in [k]}$, $f_2(x,c'_i) = f_2(x,u_i)$ for every $i \in [k]$.
\end{proof}

Therefore from now on, we can assume that no point of $ x \in R$ is such that $\dom(x) \subseteq I_i$ for all $i \in [k]$. The previous statement and the conditions on $|I_i|$ imply the following remark: 

\begin{observation}\label{obs:size_ni}
   If $\mathcal{P} = \{ (n_i)_{i \in [k]}, (H_i)_{i \in [k]}, (I_i)_{i \in [k]}, (u_i)_{i \in [k]}  \}$ is a partial clustering such that $\sum_{i \in [k]} n_i  = k(\Delta +1)$, then $\bigcup_{i \in [k]} H_i = P$.  
\end{observation}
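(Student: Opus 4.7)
The plan is to chain together three things already on the table: the upper bound $n_i \in [\Delta+1]$, the lower bound $|I_i| \geq d - \Delta + (n_i - 1)$ when $n_i > 0$, and the standing assumption stated just after Observation \ref{obs:1}.

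First I would argue that the hypothesis $\sum_{i \in [k]} n_i = k(\Delta+1)$ forces $n_i = \Delta+1$ for every $i \in [k]$. Indeed, each $n_i$ is an integer in $[\Delta+1]$, so $n_i \leq \Delta+1$; since $k$ such numbers sum to $k(\Delta+1)$, every inequality must be tight. In particular, every $n_i$ is strictly positive, so the bound $|I_i| \geq d - \Delta + (n_i - 1)$ applies and yields $|I_i| \geq d - \Delta + \Delta = d$, hence $I_i = [d]$ for all $i \in [k]$.

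Now consider any $x \in R = P - \bigcup_{i\in[k]} H_i$. By definition of $\dom(\cdot)$ we have $\dom(x) \subseteq [d] = I_i$ for every $i$. But the discussion following Observation \ref{obs:1} allows us to assume that no point of $R$ satisfies $\dom(x) \subseteq I_i$ for all $i \in [k]$ (since Observation \ref{obs:1} lets us reassign any such point into one of the $H_i$ without changing the value of $\mathcal{P}$). Hence $R$ must be empty, which is the desired conclusion $\bigcup_{i \in [k]} H_i = P$.

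There is no real obstacle here: the statement is essentially a bookkeeping consequence of the definitions. The only step that needs to be stated carefully is the invocation of the post-\hyperref[obs:1]{Observation~\ref{obs:1}} convention, since without it a point with $\dom(x) \subseteq [d]$ could still sit in $R$ — but under that convention such points are assumed to have already been absorbed into some $H_j$.
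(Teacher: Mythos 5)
Your proof is correct and is exactly the argument the paper intends: the paper gives no explicit proof, merely noting that the observation follows from the post-Observation~\ref{obs:1} convention and the conditions on $|I_i|$, which is precisely the chain (all $n_i=\Delta+1$, hence $I_i=[d]$, hence $R=\emptyset$) that you spell out.
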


Let $\mathcal{P} = \{ (n_i)_{i \in [k]}, (H_i)_{i \in [k]}, (I_i)_{i \in [k]}, (u_i)_{i \in [k]}  \}$ be a partial clustering, and let $(P'_i)_{i \in [k]}$ with centers $(c'_i)_{i \in [k]}$ be its optimal extension. Let us denote $R_i = R \cap P'_i$. The goal of the algorithm is to sample some of the elements of $R$ in order to guess, for some $i \in [k]$, coordinates of $c'_i$ which are not in $I_i$. To do so we need to make sure that our sampling avoids elements $x$ of $R_i$ such that $\dom(x) \subseteq I_i$ ($x\in \fd(R_i, I_i)$) as these elements do not provide any information about $[d] - I_i$. The goal of the following section will be to cluster some of the elements of $\fd(R_i, I_i)$ in order to make this set small compared to $R$. Note that we might need to consider an extension which is not an optimal one.

\section{Finding a Proper Partial Clustering}\label{sec:cleaning_sample_pool}

This section is devoted to the proof of the following lemma:

\begin{lemma}\label{lem:partial_assignment}
    Let $\mathcal{P} = \{ (n_i)_{i \in [k]}, (H_i)_{i \in [k]}, (I_i)_{i \in [k]}, (u_i)_{i \in [k]}  \}$ be a partial clustering. For every constant $\alpha\in \mathbb{R}^+$, $0<\alpha<1$, there exists an algorithm running in time $\Oh(ndk)$, that with probability at least $(\frac{1}{3k^2 \Delta \log(n)^k})$ either: 
    \begin{itemize}
        \item Returns a partial clustering $\mathcal{P}' = \{ (n'_i)_{i \in [k]}, (H'_i)_{i \in [k]}, (I'_i)_{i \in [k]}, (u'_i)_{i \in [k]}  \}$ with $OPT(\mathcal{P}') \leq (1 + \alpha)OPT(\mathcal{P})$ and $\sum_{i \in [k]}n'_i > \sum_{i \in [k]}n_i$; or
        \item Finds a set $B$ of elements of $R$ such that there exists an extension $(P'_i)_{i \in [k]}$ of $\mathcal{P}$ with value at most $(1+\alpha)OPT(\mathcal{P})$ and such that $B \subseteq \bigcup_{i \in [k]}\fd(P'_i, I_i)$ and for every $i \in [k]$, there is an index $j \in [k]$ such that $ |\pd(P'_j \cap R,I_i - I_j)| \geq \frac{\alpha}{32\cdot6^{(\Delta+1)k}} |\fd(P'_i \cap R, I_i) - B|. $
    \end{itemize} 
\end{lemma}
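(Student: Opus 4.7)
My plan is to design a randomized procedure whose branching tree has total size $3k^2\Delta \log(n)^k$; on one specific branch, the output deterministically satisfies the first or the second conclusion of the lemma, so picking this branch uniformly at random yields the claimed probability. At a high level, the procedure attempts, for each cluster $\ell$, to identify points of $\fd(R, I_\ell)$ that can be confidently assigned to some cluster in a near-optimal extension: if enough such points are found for some $\ell$, we move them into $H_\ell$ and learn at least one new coordinate of $u_\ell$ (outcome 1); otherwise, the union of the identified points forms the set $B$ (outcome 2).

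\textbf{Branching structure.} I would let the branches encode: (i) a \emph{critical pair} $(i^*, j^*) \in [k]^2$ of clusters (factor $k^2$), (ii) an integer $s \in [\Delta]$ representing the number of ``bad'' coordinates to discount between $u_{i^*}$ and $u_{j^*}$ (factor $\Delta$), (iii) for each cluster $\ell \in [k]$ a threshold $r_\ell$ chosen from a geometric scale $\{2^0, 2^1, \ldots, 2^{\lceil \log n \rceil}\}$ (factor $\log(n)^k$), and (iv) a few binary decisions (the factor $3$). Given these guesses the per-branch computation is deterministic: for each $\ell$, form $S_\ell = \{x \in \fd(R, I_\ell) : d^{I_\ell}(x, u_\ell) \leq r_\ell\}$, the set of candidate ``already-classified'' points close to $u_\ell$ on $I_\ell$. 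On the favorable branch, the $r_\ell$'s are tight enough that $S_\ell$ closely approximates $\fd(P'_\ell \cap R, I_\ell)$ for the near-optimal extension witnessed by outcome 2.

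\textbf{Case split.} If on the favorable branch some $S_\ell$ is large and also contains a point $y$ with $\dom(y) \not\subseteq I_\ell$, then adding $S_\ell$ to $H_\ell$ and extending $u_\ell$ on one new coordinate from $\dom(y) \setminus I_\ell$ produces a partial clustering $\mathcal{P}'$ with $\sum n'_i > \sum n_i$; the $(1+\alpha)$-budget on $OPT(\mathcal{P}')$ follows from Lemma~\ref{lem:approx_center_1} applied to the modified cluster means. Otherwise, we output $B = \bigcup_\ell S_\ell$. The key algorithmic invariant I expect to establish is that on the favorable branch, for each $i$ the witness $j$ required by outcome 2 is exactly $j^*$ (or a fixed cluster derivable from $(i^*, j^*, s)$), and the inequality $|\pd(P'_j \cap R, I_i - I_j)| \geq \frac{\alpha}{32\cdot 6^{(\Delta+1)k}}|\fd(P'_i \cap R, I_i) - B|$ follows by bounding the residual useless mass by the guessed radii.

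\textbf{Main obstacle.} The principal difficulty, explicitly flagged in the algorithm overview, is that the $\Delta$ ``bad'' coordinates between $u_{i^*}$ and $u_{j^*}$ (the analogue of the set $[d]\setminus I_{1,2}$ in the $k=2, \Delta=0$ sketch) are unknown, and enumerating them would cost $\binom{d}{\Delta}$. The resolution I would pursue is an iterative coordinate-peeling argument: instead of choosing the bad coordinates explicitly, rerun a cleanup step up to $k(\Delta+1)$ times, where each invocation loses a factor of $6$ from a restricted-coordinate triangle-inequality-type slack (using that on the non-peeled coordinates, the ball of radius $t'/4$ around $u_{i^*}$ contains no points of $P'_{j^*}$, mirroring the overview). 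Accumulating these losses across all clusters yields the $6^{(\Delta+1)k}$ denominator, and the factor $32$ arises from squaring the radius-$t'/4$ constant as in the $k$-means analysis of Kumar et al. The verification that the resulting $B$ meets the ``for every $i$, there exists $j$'' condition should reduce to a pigeonhole argument: if no witness $j$ existed for some $i$, the residual useless mass would exceed the cumulative slack budget the algorithm committed to on the favorable branch, contradicting the choice of $(i^*, j^*, s)$ and the tightness of the guessed thresholds $(r_\ell)$.
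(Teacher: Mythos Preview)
Your high-level shape --- guess $\Oh(k^2\Delta\log(n)^k)$ branches, on the right branch either progress or output a ball-union $B$ --- matches the paper, but the mechanism you describe for \emph{outcome 1} does not work, and the role of the iteration is inverted.

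\textbf{The outcome-1 branch is broken as written.} You define $S_\ell \subseteq \fd(R,I_\ell)$, and then propose to trigger outcome 1 when ``$S_\ell$ contains a point $y$ with $\dom(y)\not\subseteq I_\ell$''. By definition of $\fd$, every $y\in S_\ell$ has $\dom(y)\subseteq I_\ell$, so this condition is vacuous. More importantly, even if you located such a $y$, extending $u_\ell$ on a coordinate of $\dom(y)\setminus I_\ell$ by using $y$'s value cannot be controlled by Lemma~\ref{lem:approx_center_1}: a single point's coordinate can be arbitrarily far from the true center. In the paper, outcome 1 is \emph{never} achieved by reading a coordinate off a data point. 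It is achieved by \emph{copying} a coordinate (or an entire block $I'\subseteq I_i$) from an already-known approximate center $u_i$ into $u_j$; the analysis (Lemmas~\ref{lem:distance1}, \ref{lem:distance2}, \ref{lem:new_partial1}) certifies that when the ball around $u_i$ fails to peel enough points, $u_i$ and $c'_j$ must be close on those coordinates, so copying is cheap. Your proposal lacks this ``copy from another center'' idea, which is the actual content of outcome 1.

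\textbf{The $6^{(\Delta+1)k}$ iteration is analytical, not algorithmic.} You write ``rerun a cleanup step up to $k(\Delta+1)$ times''. In the paper the algorithm is a \emph{single} round of guessing (one of three cases; a pair $(i,j)$; possibly one coordinate in $I_i\setminus I_j$; one radius index per $i$). The iteration lives entirely in the analysis: one constructs a sequence of hypothetical extensions $(P^s_i)$ equipped with growing ``useless'' coordinate sets $(Z^s_{i,j})$, each step inflating the slack by a factor $\le 6$, and argues that after at most $(\Delta+1)k$ such analytical steps one of the three algorithmic cases must hold for \emph{some} extension of value $\le(1+\alpha)OPT(\mathcal{P})$. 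The algorithm never sees the $Z_{i,j}$'s or the intermediate extensions; it only needs the guarantee that its fixed guesses are correct for that final extension. Your plan to realize the iteration inside the algorithm would either blow up the running time or require the algorithm to know the (unknown) extension in order to decide when to stop. The missing ingredient is precisely the machinery of compatible $t$-useless sets (Lemmas~\ref{lem:useless_moving}, \ref{lem:useless_next}, \ref{lem:partial_assignment_induc}), which lets the analysis absorb the unknown $\Delta$ bad coordinates without the algorithm ever enumerating them.
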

The Lemma~\ref{lem:partial_assignment} will serve as the base for one step of our algorithm in Section~\ref{sec:algo}. The basic idea behind our main algorithm is to iteratively extend the partial clustering, until we get a clustering of $P$. In each step it computes a partial clustering $\mathcal{P}' = \{ (n'_i)_{i \in [k]}, (H'_i)_{i \in [k]}, (I'_i)_{i \in [k]}, (u'_i)_{i \in [k]}  \}$ with $OPT(\mathcal{P}') \leq (1 + \alpha)OPT(\mathcal{P})$ and $\sum_{i \in [k]}n'_i > \sum_{i \in [k]}n_i$ with high enough probability. Observations~\ref{obs:1}~and~\ref{obs:size_ni} then allow us to conclude in at most $k(\Delta+1)$ steps. Note that the first case of the Lemma~\ref{lem:partial_assignment} is precisely what we want. On the other hand, the second case, together with the fact that there are at most $k$ clusters and the pigeonhole principle, will allow us to show that there is an index $r\in[k]$ such that $|(P'_r\cap R) -\fd(P'_r \cap R, I_r)| \geq h(k,\Delta,\alpha)|R-B|$ for some function $h$ depending only on $k$, $\Delta$, and $\alpha$ and $(P'_r\cap R) -\fd(P'_r \cap R, I_r)\subseteq R-B$. Hence we can with sufficiently high probability, by sampling in $R- B$, obtain some points from $(P'_r\cap R) -\fd(P'_r \cap R, I_r)$ and a good approximation of the center $c_r$ on some coordinate outside of $I_r$. 

\subsection{Overview of the Proof}

Before presenting our quite technical proof in the next subsection, let us first explain some of the ideas and difficulties encountered. 
Let $\mathcal{P} = \{ (n_i)_{i \in [k]}, (H_i)_{i \in [k]}, (I_i)_{i \in [k]}, (u_i)_{i \in [k]}  \}$ be a partial clustering and $(P_i)_{i \in [k]}$ be an optimal extension with the associated centers $(c'_i)_{i \in [k]}$. Let us denote $R_i = P_i \cap R$ for every $i \in [k]$. 

Suppose first that $\Delta = 0$ and let $(c_i)_{i \in [k]}$ be the centers of an optimal extension. Kumar et al. \cite{KumarSS10} proved a similar statement as Lemma \ref{lem:partial_assignment}, where the first condition is replaced by the statement that two centers can be equal. To do that, assuming that they have a good approximation $u_i$ of one center $c_i$, they consider the ball $B$ with center $u_i$ and radius $t/4$ where $t$ is the minimum over all $j \in [k]$, $j \not = i$ of the distance of $c_i$ to the other centers $c_j$. Because $u_i$ is a good approximation, $B$ contains only elements of $P_i$. Moreover, they are able to show that either $|P_j|$ is large enough compared to $|P_i-B|$, or putting all the points of $P_j$ in $P_i$ gives a good approximation. Unfortunately, this property is not true anymore for $\Delta$-points with $\Delta >0$. 

First note that in the case when $\Delta >0$, as opposed to Kumar et al. \cite{KumarSS10} where all approximate centers $u_i$ are either in $\mathbb{R}^d$ or not set at all, we have approximate centers that are partially set. Now, if we have some two centers $u_i$ and $u_j$ and we want to extend $u_j$ to some coordinates in $\dom(u_i)-\dom(u_j)=I_i-I_j$ then even
if $t = d^{I_i - I_j}(u_i, c'_j)$, it might not be true that the ball with center $u_i$ and diameter $t/4$ contains no elements of $P_j$ which makes the previous argument considerably more difficult to make. To overcome this problem we will consider the coordinates $r$ in $I_i - I_j$ where $d^r(u_i, c'_j)$ is large, separately one by one. 

Let us now fix an index $i$ for the rest of this subsection. For every $j \in [k]$, let $J_j = I_i - I_j$, let $i^j_1, \dots, i^j_{|J_j|}$ be the coordinates in $J_j$, and let $d^j_r = d^{i^j_r}(u_i, c'_j)$ for all $r \in [|J_j|]$. Without loss of generality, we can assume that $d^j_1 \geq d^j_2 \geq \dots \geq  d^j_{|J_j|}$.  
We distinguish two cases depending on whether $|J_j| \geq \Delta +1$ or not.

\paragraph*{Case 1:  $ |J_j| \geq \Delta +1$.}
Let us denote by $J'_j$ the set $\{i^j_{\Delta + 1}, \dots, i^j_{|J_j|}\}$ and let $d^j = d^{J'_j}(u_i, c'_j)$. Note that in this case it follows from the definition of partial clustering that $I_j = \emptyset$, $n_j = 0$, and, consequently,  $J_j=I_i$. 

\begin{lemma}\label{lem:balls_distance}
    For every $j\in [k]\setminus \{i\}$ such that $I_j = \emptyset$ and every $x \in R_j$ such that $\dom(x) \subseteq I_i$, it holds that $d(x,u_i) > d^j/4$. 
\end{lemma}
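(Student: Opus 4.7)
The plan is to argue by contradiction: I assume $d(x, u_i) \leq d^j/4$ and combine the optimality of the extension with two applications of the triangle inequality on carefully chosen subsets of $J_j = I_i$ to reach a contradiction.

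First, since $\dom(x) \subseteq I_i = \dom(u_i)$ and $c'_i$ coincides with $u_i$ on $I_i$, both $d(x, c'_i)$ and $d(x, u_i)$ reduce to the same sum over $\dom(x)$; hence $d(x, c'_i) = d(x, u_i)$. The optimality of the extension then gives $d(x, c'_j) \leq d(x, c'_i) = d(x, u_i) \leq d^j/4$: moving $x$ from $P_j$ to $P_i$ yields a valid extension, and the standard exchange calculation (in which re-optimizing the affected centers afterwards can only decrease the value) reduces the ``value cannot strictly decrease'' requirement to $d(x, c'_j)^2 \leq d(x, c'_i)^2$.

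The key step is the decomposition $J'_j = A \cup B$ with $A := J'_j \cap \dom(x)$ and $B := J'_j - \dom(x)$, together with the auxiliary set $A'' := (J_j - J'_j) \cap \dom(x)$. A quick counting argument shows $|A''| \geq |B|$: the point $x$ has at most $\Delta$ undefined coordinates in total and $|J_j - J'_j| = \Delta$, so every undefined ``slot'' of $x$ inside $J'_j$ forces a defined ``slot'' of $x$ inside $J_j - J'_j$. On $A$, where $u_i$, $x$, and $c'_j$ are all defined, the Euclidean triangle inequality gives
\[ d^A(u_i, c'_j) \leq d^A(u_i, x) + d^A(x, c'_j) \leq d(x, u_i) + d(x, c'_j) \leq 2 d(x, u_i) \leq d^j/2. \]
Since $A$ and $B$ partition $J'_j$, I have $(d^j)^2 = d^A(u_i, c'_j)^2 + d^B(u_i, c'_j)^2$, and so $d^B(u_i, c'_j)^2 \geq 3(d^j)^2/4$. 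Every coordinate $r \in J'_j$ satisfies $((u_i)_r - (c'_j)_r)^2 \leq (d^j_{\Delta+1})^2$ by the definition of $J'_j$, which yields $|B|(d^j_{\Delta+1})^2 \geq 3(d^j)^2/4$.

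Finally, the same triangle inequality on $A''$ (whose coordinates lie in $\dom(x)$, and on which $c'_j$ is therefore defined) gives $d^{A''}(u_i, c'_j)^2 \leq (d^j/2)^2 = (d^j)^2/4$. Yet every $r \in J_j - J'_j$ lies among the top $\Delta$ and satisfies $((u_i)_r - (c'_j)_r)^2 \geq (d^j_{\Delta+1})^2$, so
\[ d^{A''}(u_i, c'_j)^2 \geq |A''|(d^j_{\Delta+1})^2 \geq |B|(d^j_{\Delta+1})^2 \geq \frac{3(d^j)^2}{4}, \]
contradicting the preceding upper bound whenever $d^j > 0$ (the case $d^j = 0$ is degenerate and handled separately). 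The principal obstacle, and the reason $J'_j$ was defined by stripping off the top-$\Delta$ coordinates in the first place, is the dual role of $(d^j_{\Delta+1})^2$: it must simultaneously serve as a per-coordinate upper bound inside $J'_j$ and a per-coordinate lower bound outside $J'_j$, and the counting $|A''| \geq |B|$ is precisely what synchronizes the two triangle-inequality estimates into an incompatibility.
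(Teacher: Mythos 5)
Your proof is correct and follows essentially the same route as the paper's: the counting argument $|A''|\ge|B|$ together with the ordering of the $d^j_r$ is exactly the paper's observation that $|\dom(x)-J'_j|\ge|J'_j-\dom(x)|$ with each coordinate of $\dom(x)-J'_j$ at least as large, and your two separate triangle-inequality estimates on $A$ and $A''$ just unpack the paper's single inequality $d^{\dom(x)}(u_i,c'_j)\ge d^j$ followed by one triangle inequality on $\dom(x)$.
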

\begin{proof}
	First observe that if $d^j=0$, then the lemma trivially holds and we can assume for the rest of the proof that $d^j>0$.
    Now, note that $|I_i- \dom(x)| \le \Delta$ and $|I_i-J'_j|=\Delta$. Because both sets ($\dom(x)$ and $J'_j$) are subsets of $I_i$, we have $|\dom(x)-J'_j|\ge |J'_j-\dom(x)|$. 
    Moreover, by the definition of $J'_j$ and because $\dom(x)\subseteq I_i$ we have that $\dom(x)- J'_j \subseteq \{i^j_1, \dots, i^j_{\Delta}\}$. 
    Since $d^j_1, \dots, d^j_{\Delta}$
    are larger than any $d^j_{r} $ for $r > \Delta$, we have that $d^{\dom(x)}(u_i, c'_j) \geq d^j$. 
    
    For the sake of contradiction, let us assume for the remainder of the proof that $d(x,u_i) \leq \frac{d^j}{4}$. 
    Since $\dom(x) \subseteq I_i$, we have $d(x,c'_i) = d(x, u_i)$ and $d(x, c'_i)\le \frac{d^j}{4}$. Moreover, because $(P_i)_{i \in [k]}$ is optimal, we have that $d(x, c'_j) \leq d(x, c'_i)\le \frac{d^j}{4}$. Finally, since $x$, $u_i$, and $c'_j$ are points without any \? on $\dom(x)$, the triangle inequality implies that $d^{\dom(x)}(u_i, c'_j) \leq d^{\dom(x)}(u_i, x) + d^{\dom(x)}(x, c'_j) \le 2\frac{d^j}{4}$, which contradicts $d^{\dom(x)}(u_i, c'_j) \geq d^j$.
\end{proof}

Given the above, we show in the following lemma that the set of elements in $\fd(R, I_i)$ at distance at most $d^j/4$ from $u_i$ is basically what is sufficient to include in the set $B$ of elements in $R$ for the index $i$ to satisfy the second case of Lemma~\ref{lem:partial_assignment} with a caveat that if it is not the case, then we can actually set $u_j$ to be the same as $u_i$ on the coordinates of $J'_j$ without introducing too large error. 

\begin{lemma}\label{lem:distance1}
    For every constant $c>0$ and every index $j\in [k]\setminus\{i\}$ such that $I_i - I_j > \Delta$, if $B$ denotes the set of all elements in $\fd(R,I_i)$ at distance at most $d^j/4$ from $u_i$, then:  
    \begin{itemize}
        \item Either $|\pd(R_j,J'_j) - B| \geq c |P_i - B|$; 
        \item or $f^{J'_j}_2(P_j, u_i) - f^{J'_j}_2(P_j, c'_j) \leq 16c f_2(P_i, u_i)$.
    \end{itemize}
\end{lemma}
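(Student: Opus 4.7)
The plan is to prove the lemma by contrapositive: assuming the first bullet fails, i.e., $|\pd(R_j, J'_j) - B| < c|P_i - B|$, I would derive the second. The two key ingredients are Lemma \ref{lem:balls_distance}, which forces points of $R_j$ to stay far from $u_i$ on the relevant coordinates, and the centroid identity from Lemma \ref{lem:approx_center_1}.

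First I would extract structural consequences of the hypothesis $|I_i - I_j| > \Delta$. By the partial-clustering conventions this forces $I_j = \emptyset$, hence $H_j = \emptyset$, $P_j = R_j$, and $c'_j = c(P_j)$ on every coordinate (in particular on $J'_j$). By Lemma \ref{lem:balls_distance}, every $x \in R_j$ with $\dom(x) \subseteq I_i$ lies at distance more than $d^j/4$ from $u_i$, so $R_j \cap B = \emptyset$; consequently, the failure of the first bullet reads $|\pd(R_j, J'_j)| < c\,|P_i - B|$.

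Next I would unpack the left-hand side of the second bullet. Applying Lemma \ref{lem:approx_center_1} with $I = J'_j$ and $x = u_i$, together with $c'_j = c(P_j)$ on $J'_j$, yields
\begin{equation*}
    f^{J'_j}_2(P_j, u_i) - f^{J'_j}_2(P_j, c'_j) = \sum_{r \in J'_j} |\pd(P_j, r)|\,(u_{i,r} - c'_{j,r})^2.
\end{equation*}
Using $|\pd(P_j, r)| \leq |\pd(R_j, J'_j)|$ for every $r \in J'_j$ and the identity $\sum_{r \in J'_j}(u_{i,r} - c'_{j,r})^2 = (d^j)^2$, this quantity is at most $|\pd(R_j, J'_j)|\,(d^j)^2 < c\,|P_i - B|\,(d^j)^2$.

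Finally, I would establish the lower bound $|P_i - B|\,(d^j)^2 \leq 16\, f_2(P_i, u_i)$, which combined with the previous inequality yields the second bullet. The clean half of the argument is that any $x \in P_i \setminus B$ lying in $\fd(R, I_i)$ must satisfy $d(x, u_i) > d^j/4$ (else it would lie in $B$), contributing at least $(d^j)^2/16$ to $f_2(P_i, u_i)$. Since the Case~1 hypothesis also forces $H_i = \emptyset$ (Observation~\ref{obs:1} is vacuous whenever some $I_{i'}$ is empty), the only remaining points in $P_i \setminus B$ are those $x \in R_i$ with $\dom(x) \not\subseteq I_i$; bounding their aggregate contribution is the main obstacle. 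I would handle them by charging against the coordinates in $\dom(x) \cap I_i$, using that $|\dom(x)|, |I_i| \geq d - \Delta$ forces a substantial overlap, and invoking the optimality of the extension $(P_i)_{i \in [k]}$ to lower-bound their contribution to $f_2(P_i, u_i)$ in terms of $(d^j)^2$, possibly at the cost of a slightly larger absolute constant that the stated $16$ comfortably absorbs. Combining yields $f^{J'_j}_2(P_j, u_i) - f^{J'_j}_2(P_j, c'_j) \leq 16c\, f_2(P_i, u_i)$.
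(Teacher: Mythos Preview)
Your argument tracks the paper's almost step for step: assume the first alternative fails, deduce from Lemma~\ref{lem:balls_distance} that $R_j \cap B = \emptyset$ and hence $|\pd(R_j,J'_j)| < c\,|P_i - B|$, apply Lemma~\ref{lem:approx_center_1} to get $f_2^{J'_j}(P_j,u_i) - f_2^{J'_j}(P_j,c'_j) \le |\pd(R_j,J'_j)|\,(d^j)^2$, and finish via the lower bound $f_2(P_i,u_i) \ge |P_i - B|\,(d^j/4)^2$.

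The only divergence is in your last paragraph. The paper asserts that lower bound in a single line (``all the points of $P_i - B$ are at distance at least $d^j/4$ from $u_i$'') and does not separate out points of $H_i$ or points of $R_i$ with $\dom(x)\not\subseteq I_i$. Your charging argument for those residual points is not part of the paper's proof, and your claim that the Case~1 hypothesis forces $H_i = \emptyset$ is incorrect: $|I_i - I_j| > \Delta$ forces $I_j = \emptyset$, not $I_i = \emptyset$, and the sets $H_{i'}$ are data of the given partial clustering rather than something produced by Observation~\ref{obs:1}. (This lemma sits in the overview subsection; its formal counterpart, Claim~\ref{clm:distance1}, in fact weakens the right-hand side from $|P_i - B|$ to $|\fd(R'_i,I_i) - B|$, for which the bound is immediate from the definition of $B$.)
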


\begin{proof}
	Recall that $I_i - I_j > \Delta$ implies that $I_j=\emptyset$ and by the definition of partial clustering we have $n_j=0$, $H_j=\emptyset$ and consequently $P_j=R_j$. 
    Suppose that $|\pd(R_j,J'_j) - B| < c |P_i - B|$ (otherwise we are done). We know that $f_2(P_i, u_i) \geq |P_i - B| (d^j/4)^2$ as all the points of $P_i - B$ are at distance at least $(d^j/4)$ from $u_i$. By Lemma \ref{lem:balls_distance}, we have that \[|\pd(R_j,J'_j)| = | \pd(R_j, J'_j) - B| \leq c|P_i - B|.\] However, by Lemma \ref{lem:approx_center_1}, \[f_2^{J'_j}(R_j, u_i) - f_2^{J'_j}(R_j, c'_j) \leq |\pd(R_j, J'_j)|(d^j)^2,\] which implies that  \[f_2^{J'_j}(P_j, u_i) - f_2^{J'_j}(P_j, c'_j) = f_2^{J'_j}(R_j, u_i) - f_2^{J'_j}(R_j, c'_j) \leq 16c f_2(P_i, u_i).\]
\end{proof}

Now let $j$ be the index that minimizes $d^j$
among all indices $j'$ in $[k]\setminus \{i\}$ for which $I_{j'}=\emptyset$ 
(i.e., an index $j$ such that $d^j=\min_{j'\in[k]\setminus\{i\},  I_{j'}=\emptyset} \{d^{j'} \}$). Then the set $B = \{x\in \fd(R,I_i)\,\mid\, d(x, u_i)\le d^j/4 \}$, \emph{i.e.}, the set of all the elements in $\fd(R,I_i)$ at distance at most $d^j/4$ from $u_i$, does not contain any element in $\bigcup_{j'\in [k]\setminus i, I_{j'}=\emptyset}R_{j'}$ (that is, any element of $R_{j'}$ for every $j'\in [k]\setminus\{i\}$ with $I_{j'}=\emptyset$). 
Furthermore, we have that 
\begin{itemize}
\item  either $|\pd(R_j,J_j) - B| \geq c |R_i - B|$, 
\item  or $f^{J'_j}_2(R_j, u_1) - f^{J'_j}_2(R_j, c'_2) \leq 4 \alpha \Delta f_2(R_1, u_1)$. 
\end{itemize}
When the first inequality occurs, this  is the good case. Basically it means that $\pd(R_j,J'_j)$ is large enough so that sampling in $R - B$ avoids $\fd(R_i,I_i)$ with constant probability. Note that even though we do not know $d^j$, we can get a superset of $R - B$ of size at most $2|R-B|$ with probability $\frac{1}{\log n}$ by taking $\frac{n}{2^r}$ furthest points from $u_i$ for some $r\in [\lfloor\log n\rfloor]$.
To deal with the  case when the second inequality holds, let us first show the following lemma, which will be useful throughout the paper.

\begin{lemma}\label{lem:new_partial1}
    Let $\mathcal{P} = \{ (n_i)_{i \in [k]}, (H_i)_{i \in [k]}, (I_i)_{i \in [k]}, (u_i)_{i \in [k]}  \}$ be a partial clustering, $(P_i)_{i \in [k]}$ be an optimal extension of $\mathcal{P}$ with associated centers $(c'_i)_{i \in [k]}$, and $C\in \mathbb{R}^+$. If there exist two indices $i,j \in [k]$ and $I' \subseteq I_i$ such that $I_j$ is empty, $|I'| \geq d - \Delta$, and $f^{I'}_2(P_j, u_i) - f^{I'}_2(P_j, c'_j) \leq C f_2(P_i, u_i)$, then the partial clustering $\mathcal{P'}$ obtained from $\mathcal{P}$ by setting $I_j = I'$, $u_j$ the element of $\mathbb{H}^d$ equal to $u_i$ on the coordinates of $I'$ and \? on the rest and $n_j := 1$ is a partial clustering of value at most $(1+ C)OPT(\mathcal{P})$.  
\end{lemma}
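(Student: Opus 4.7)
The plan is to establish two things: first that $\mathcal{P}'$ as described actually satisfies the definition of a partial clustering, and second that it admits an extension of value at most $(1+C)OPT(\mathcal{P})$.

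For the first part, the only changes from $\mathcal{P}$ to $\mathcal{P}'$ are at index $j$: we leave $H_j$ empty, set $n_j = 1$, $I_j = I'$, and define $u_j$ to equal $u_i$ on the coordinates of $I'$ and \? elsewhere. Since $I' \subseteq I_i = \dom(u_i)$, the point $u_i$ has no \? on $I'$, so $\dom(u_j) = I'$ as required. The size condition $|I_j| \geq d - \Delta + (n_j-1) = d-\Delta$ holds by hypothesis on $|I'|$, and $n_j = 1 \in [\Delta+1]$. All conditions at other indices are inherited unchanged from $\mathcal{P}$.

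For the value bound, I will use the optimal extension $(P_i)_{i \in [k]}$ of $\mathcal{P}$ (with associated centers $(c'_i)_{i \in [k]}$) as an extension of $\mathcal{P}'$, which is legitimate since $H_\ell$ did not change for any $\ell$ (in particular $H_j = \emptyset \subseteq P_j$). Let $(c''_\ell)_{\ell \in [k]}$ denote the centers associated with this partition under $\mathcal{P}'$. Since the partial clustering changed only at index $j$, we have $c''_\ell = c'_\ell$ for every $\ell \neq j$, whereas $c''_j$ equals $u_i$ on $I'$ and $c^{[d]-I'}(P_j)$ on $[d]-I'$. Because $I_j = \emptyset$ in $\mathcal{P}$, the original center $c'_j$ is simply $c(P_j)$, and in particular $c'_j$ restricted to $[d]-I'$ agrees with $c^{[d]-I'}(P_j)$. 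Splitting the cost along $I'$ and $[d]-I'$ therefore gives
\[
f_2(P_j, c''_j) - f_2(P_j, c'_j) = f^{I'}_2(P_j, u_i) - f^{I'}_2(P_j, c'_j) \le C f_2(P_i, u_i),
\]
where the last inequality is the hypothesis of the lemma.

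To finish, I will compare $f_2(P_i, u_i)$ to $f_2(P_i, c'_i)$. Since $u_i$ has domain exactly $I_i$, the convention that a \? coordinate contributes zero gives $f_2(P_i, u_i) = f^{I_i}_2(P_i, u_i) = f^{I_i}_2(P_i, c'_i) \le f_2(P_i, c'_i) \le OPT(\mathcal{P})$. Combining,
\[
\val\bigl((P_\ell)_{\ell\in[k]} \text{ under } \mathcal{P}'\bigr) = OPT(\mathcal{P}) + f_2(P_j, c''_j) - f_2(P_j, c'_j) \le (1+C)\,OPT(\mathcal{P}),
\]
and since $OPT(\mathcal{P}')$ is the minimum value over all extensions, the claim follows. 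The only even mildly subtle step is the observation that $f_2(P_i, u_i) \le f_2(P_i, c'_i)$, which I do not expect to be a real obstacle once one recalls the conventions about \? contributions. No new randomization or distance-inequality machinery is needed here; the lemma is essentially a bookkeeping statement showing that a guess on $I'$ via $u_i$ costs at most a $(1+C)$ factor.
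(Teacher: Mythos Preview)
Your proof is correct and follows essentially the same route as the paper: use the optimal extension $(P_\ell)_{\ell\in[k]}$ of $\mathcal{P}$ as an extension of $\mathcal{P}'$, observe that the only change in the associated centers is at index $j$ (where $c'_j$ is replaced by $u_i$ on $I'$ and is unchanged elsewhere, since $c'_j=c(P_j)$), and bound the resulting increase by $C\,f_2(P_i,u_i)\le C\cdot OPT(\mathcal{P})$. You supply additional detail the paper omits, namely the verification that $\mathcal{P}'$ is a valid partial clustering and the justification that $f_2(P_i,u_i)=f_2^{I_i}(P_i,c'_i)\le OPT(\mathcal{P})$.
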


\begin{proof}
    Indeed, $(P_i)_{i \in [k]}$ is an extension of $\mathcal{P}'$. Let $C_j$ be the point of $\mathbb{H}^d$
    such that 
    \begin{equation}
    (C_j)_r =
    \begin{cases}
    (u_i)_r & \text{if $r\in I'$},\\
    (c'_j)_r & \text{otherwise},\\
    \end{cases}       
    \end{equation}
    and let $C_s= c'_s$ for $s \in [k]\setminus \{j\}$. Then 
    \begin{align*}
        \val\left((P_i)_{i \in [k]}, (C_i)_{i \in [k]} \right) &\leq \val\left((P_i)_{i \in [k]}, (c'_i)_{i \in [k]} \right) + C f_2(P_i, u_i)\\
        &\leq (1 + C)OPT(\mathcal{P}). \tag*{\qedhere}
    \end{align*}
\end{proof}
This means that the extension $\mathcal{P}'$ obtained by applying the previous Lemma with $I' = J'_j$ satisfies the first property of Lemma \ref{lem:partial_assignment}. A major problem is that we do not know $J'_j$ and in the worst case there are $d^{\Delta}$ possibilities for $J'_j$, so guessing a feasible set $I'$ for Lemma~\ref{lem:new_partial1} is not  an option here.
We postpone dealing with this problem for later and switch our focus to the second case.

\paragraph{Case 2: $|J_j| \leq \Delta$.}
Let $r=|J_j|$.
Recall that $i^j_1, \dots, i^j_{r}$ denote the coordinates of $J_j=I_i - I_j$ and $r$ is such that $d^j_r$ is minimal. Let us denote by $B_j$ the ball of elements of $\fd(R,I_i)$ at distance at most $d^j_r$ from $u_i$. 
We show now that we can adapt Lemmas~\ref{lem:balls_distance} and \ref{lem:distance1}  
to this setting as well.

\begin{lemma}\label{lem:balls_distance2}
    Suppose $j \in [k]$ is such that $I_i - I_j \leq \Delta$, if $x \in \pd(R_j, I_i - I_j)$, then $d(x,u_i) \geq d^j_r/4$. 
\end{lemma}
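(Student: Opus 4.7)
The plan is to adapt the template of Lemma~\ref{lem:balls_distance} to a single-coordinate setting: argue by contradiction that $d(x, u_i) < d^j_r/4$ is incompatible with the triangle inequality applied at a witness coordinate, once combined with the optimality of the extension $(P_i)_{i \in [k]}$.

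First I would dispose of the degenerate case $d^j_r = 0$, where the inequality is immediate. Assuming $d^j_r > 0$, the ordering $d^j_1 \geq \cdots \geq d^j_r$ forces $(c'_j)_{i^j_t}$ to be defined with $|(u_i)_{i^j_t} - (c'_j)_{i^j_t}| = d^j_t \geq d^j_r > 0$ for every $t \in [r]$. Since $x \in \pd(R_j, J_j)$, I can fix a witness coordinate $s = i^j_t \in \dom(x) \cap J_j$; then $|(u_i)_s - (c'_j)_s| \geq d^j_r$, and because $s \in J_j \subseteq I_i$ we additionally have $(c'_i)_s = (u_i)_s$, so that $|x_s - (c'_i)_s| = |x_s - (u_i)_s|$.

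Assume for contradiction that $d(x, u_i) < d^j_r/4$. Using the analogue of the restriction $x \in \fd(R_j, I_i)$ that was explicit in Lemma~\ref{lem:balls_distance} (and that is suggested here by the fact that the ball $B_j$ is defined through $\fd(R, I_i)$), namely $\dom(x) \subseteq I_i$, we obtain $d(x, c'_i) = d(x, u_i)$. The optimality of the extension then gives $d(x, c'_j) \leq d(x, c'_i) = d(x, u_i) < d^j_r/4$, and applying the triangle inequality at coordinate $s$ yields
\[
d^j_r \leq |(u_i)_s - (c'_j)_s| \leq |x_s - (u_i)_s| + |x_s - (c'_j)_s| \leq d(x, u_i) + d(x, c'_j) < d^j_r/2,
\]
the desired contradiction.

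The main obstacle is the step $d(x, c'_i) = d(x, u_i)$: unlike in Lemma~\ref{lem:balls_distance}, the statement does not explicitly restrict to $x \in \fd(R_j, I_i)$, and in general $c'_i$ differs from $u_i$ on $\dom(x) \setminus I_i$ via the mean $c^{[d] \setminus I_i}(P_i)$. If the lemma is intended over all $x \in \pd(R_j, J_j)$, one must split $d(x, c'_i)^2$ between $\dom(x) \cap I_i$ (which equals $d(x, u_i)^2$ and is thus small) and $\dom(x) \setminus I_i$, and then rule out the possibility that the latter absorbs the gap forced by the triangle argument at $s$ — a substantially more delicate task that would in effect require showing that, within an optimal extension, a point $x \in R_j$ lying very close to $u_i$ on $I_i$ cannot simultaneously be pulled far from $c'_i$ on $[d] \setminus I_i$.
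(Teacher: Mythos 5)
Your proof is correct and is essentially the argument the paper intends: its one-line proof (``the triangle inequality allows us to conclude'') is exactly the contradiction you spell out, combining optimality of the extension (giving $d(x,c'_j)\leq d(x,c'_i)$) with the single-coordinate triangle inequality at a witness $s \in \dom(x)\cap(I_i-I_j)$, where $d^{s}(u_i,c'_j)\geq d^j_r$ by the minimality of $d^j_r$. Your closing observation is also on point: the step $d(x,c'_i)=d(x,u_i)$ does require the implicit hypothesis $\dom(x)\subseteq I_i$, which the statement omits but which holds wherever the lemma is applied, since the ball is taken inside $\fd(R,I_i)$.
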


\begin{proof}
    Let $i'$ be the index of $I_i - I_j$ such that $i' \in \dom(x)$. By the choice of $i^j_r$, we have that $d^{i'}(u_i, c'_j)> d^j_r$,  and the triangle inequality allows us to conclude. 
\end{proof}

Note that this time the set of elements of $\fd(R,I_i)$ at distance at most $d^j_r/4$ from $u_i$ can contain some elements of $R_j$, but only those such that $\dom(x) \subseteq I_j$, which are the ``useless'' ones for our sampling as they do not give any information for the coordinates outside of $I_j$. An analogous proof to the one of Lemma~\ref{lem:distance1} gives the following result: 

\begin{lemma}\label{lem:distance2}
    For every constant $c\in \mathbb{R}^+$ and $j \in [k]$ such that $r = |I_i - I_j| \leq \Delta$, if we denote by $B$ the set of elements of $\fd(R,I_i)$ at distance at most $d^j_r/4$ from $u_1$, then: 
    \begin{itemize}
        \item Either $|\pd(R_j, i^j_r) - B| \geq c |R_i - B|$; 
        \item or $f^{i^j_r}_2(R_j, u_1) - f^{i^j_r}_2(R_j, c'_2) \leq 16c f_2(P_1, u_1)$.
    \end{itemize}
\end{lemma}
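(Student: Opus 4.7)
The plan is to mirror the proof of Lemma~\ref{lem:distance1}, using Lemma~\ref{lem:balls_distance2} in place of Lemma~\ref{lem:balls_distance} and the single coordinate $\{i^j_r\}$ in place of the set $J'_j$. First I would dispose of the degenerate case $d^j_r = 0$: then $(u_i)_{i^j_r} = (c'_j)_{i^j_r}$, so $f^{i^j_r}_2(R_j, u_i) = f^{i^j_r}_2(R_j, c'_j)$ and the second alternative holds trivially. So I assume $d^j_r > 0$, and additionally that the first alternative fails, i.e., $|\pd(R_j, i^j_r) - B| < c |R_i - B|$; the task is then to derive the second.

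By Lemma~\ref{lem:balls_distance2}, every $x \in \pd(R_j, i^j_r)$ satisfies $d(x, u_i) \geq d^j_r / 4$ and therefore lies outside $B$, so $|\pd(R_j, i^j_r)| = |\pd(R_j, i^j_r) - B| < c|R_i - B|$. Next, I would identify $(c'_j)_{i^j_r}$ with the coordinate-wise mean of $R_j$ on $i^j_r$: since $i^j_r \notin I_j$, the definition of associated centers gives $(c'_j)_{i^j_r} = c^{[d]-I_j}(P_j)_{i^j_r}$, but every point of $H_j$ has its domain contained in $I_j$ (this invariant is maintained by Observation~\ref{obs:1}) and therefore contributes nothing to $\pd(P_j, i^j_r)$, so $(c'_j)_{i^j_r}$ coincides with $c^{\{i^j_r\}}(R_j)_{i^j_r}$. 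This identification is what lets me turn the ``closest center'' expression into a mean expression on the right domain.

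With that in hand, Lemma~\ref{lem:approx_center_1} applied with $P = R_j$, $I = \{i^j_r\}$, and $x = u_i$ yields
\[
f^{i^j_r}_2(R_j, u_i) - f^{i^j_r}_2(R_j, c'_j) \;=\; |\pd(R_j, i^j_r)| \cdot (d^j_r)^2 \;\leq\; c\,|R_i - B|\,(d^j_r)^2.
\]
To finish, I would invoke, exactly as in Lemma~\ref{lem:distance1}, the inequality $f_2(P_i, u_i) \geq |R_i - B|\,(d^j_r/4)^2$, which rearranges to $|R_i - B|(d^j_r)^2 \leq 16\, f_2(P_i, u_i)$ and gives $f^{i^j_r}_2(R_j, u_i) - f^{i^j_r}_2(R_j, c'_j) \leq 16c f_2(P_i, u_i)$, as desired. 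The step that I expect to require the most care is this last inequality: it needs every element of $R_i - B$ to contribute at least $(d^j_r/4)^2$ to $f_2(P_i, u_i)$, which is immediate for points of $\fd(R_i, I_i) \setminus B$ from the definition of $B$, but for the remaining elements of $R_i$ must be justified exactly as in the corresponding step of Lemma~\ref{lem:distance1}. Everything else is a direct analog of Case~1.
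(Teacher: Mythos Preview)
Your proposal is correct and matches the paper's approach exactly: the paper gives no separate proof of Lemma~\ref{lem:distance2} and simply says ``an analogous proof to the one of Lemma~\ref{lem:distance1} gives the following result,'' and what you wrote is precisely that analogue, with Lemma~\ref{lem:balls_distance2} replacing Lemma~\ref{lem:balls_distance} and the single coordinate $i^j_r$ replacing $J'_j$. Your extra care about the degenerate case $d^j_r=0$ and the identification $(c'_j)_{i^j_r}=c^{\{i^j_r\}}(R_j)_{i^j_r}$ via the invariant $\dom(x)\subseteq I_j$ for $x\in H_j$ is more detail than the paper provides.

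One remark on the step you flag as delicate: you are right to worry about it, but you should be aware that the ``corresponding step of Lemma~\ref{lem:distance1}'' does not actually justify the contribution of points in $R_i\setminus\fd(R_i,I_i)$ either---the overview proof simply asserts the inequality. The paper's resolution appears later in the rigorous Claims~\ref{clm:distance1} and~\ref{clm:distance2}, where the alternative is stated with $|\fd(R'_i,I_i)-B|$ in place of $|R_i-B|$ (resp.\ $|P_i-B|$); with that replacement the bound $f_2(P_i,u_i)\ge|\fd(R_i,I_i)-B|\,(d^j_r/4)^2$ is immediate from the definition of $B$. So the ``gap'' you sense is real but is an artifact of the overview's informal phrasing rather than something Lemma~\ref{lem:distance1} actually handles.
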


Again the first case is the good one, as $B$ then provides a set such that drawing a sample from $R- B$ has constant probability to avoid $\fd(R_i,I_i)$. In the second case, however, it means that setting $I_j = I_i \cup i^j_r$ and $(u_j)_r := (u_i)_r$ gives a partial clustering of value at most $(1+ 16c)OPT(\mathcal{P})$. In that case, we can guess the index $i^j_r$ uniformly among $I_i - I_j$ and succeed with probability at least $1/\Delta$. Since we do this only when $I_j$ is non-empty, and thus of size at least $d - \Delta$, the number of times we can do this for each $j \in [k]$ is at most $\Delta$. In total, it means that we will perform this guessing only $k\Delta$ times, and it will only contribute $(\Delta)^{k\Delta}$ to the time complexity. 

Therefore, the main problem we have to overcome is the case $|I_j| = 0$ and $f^{J'_j}_2(P_j, u_i) - f^{J'_j}_2(P_j, c'_j) \leq 16c f_2(P_i, u_i)$ where we do not know $J'_j$. The main idea is the following. Recall that $j$ is the index of $[k]\setminus \{i\}$ such that $I_j$ is empty and $d^{j}$ is minimal among all such indices. Suppose $f^{J'_j}_2(P_j, u_i) - f^{J'_j}_2(P_j, c'_j) \leq 16c f_2(P_i, u_i)$ and now consider the next smallest distance $d^{j'}$ and $B'$ the set of elements of $\fd(R,I_i)$ at distance at most $d^{j'}/4$ from $u_i$. If $|\pd(R_{j'},J'_{j'}) - B'| \geq c |P_i - B'|$, then we almost have the set that we want, except that some elements of $\fd(P_j, I_i)$ can belong to $B'$. The idea will be to move these elements to $P_i$ so that $B'$ satisfies the desired properties for this new extension. While we are able to control the value of the new extension, as it increases by at most $16c f_2(P_i, u_i)$, the fact that it stops being an optimal one creates some problems. Moreover, since we want to do this for each $i \in [k]$, we have to also impose some control over the centers associated with these extensions. This is the goal of the next subsection. 
\subsection{Extensions and Useless Sets of Coordinates}

Let us fix some partial clustering $\mathcal{P} = \{ (n_i)_{i \in [k]}, (H_i)_{i \in [k]}, (I_i)_{i \in [k]}, (u_i)_{i \in [k]}  \}$. By Observation \ref{obs:1}, we can assume that for every index $i \in [k]$ such that $I_i - I_j$ is empty for every $j \in [k]$, no point $x$ satisfying $\dom(x) \subseteq I_i$ belongs to $R$. 
For the rest of this section, $\mathcal{P}$ will be fixed, and any extension $(P'_i)_{i \in [k]}$ will refer to an extension of $\mathcal{P}$ unless stated otherwise. We say that an extension is \textit{safe} if for every $x\in R$ and every $i,j \in [k]$ such that $\dom(x) \subseteq I_i$ and $\dom(x) \subseteq I_j$, $x \in P'_i$ implies that $d(x,c'_i) \leq d(x,c'_j)$.

As explained before, the hard case is when $|I_i - I_j| > \Delta$, which means that $I_j$ is empty. To deal with this case, let us first introduce the following notion of useless sets of coordinates.  
Let $t\in \mathbb{R}^+$,  $i,j \in [k]$ such that $I_j=\emptyset$ and $I_i\neq \emptyset$, let $(P'_i)_{i \in [k]}$ be an extension of $\mathcal{P}$, and finally set $R'_i = P'_i \cap R$. 
A set of indices $Z_{i,j} \subseteq I_i$ is a \textit{$t$-useless set of coordinates} for $(P'_i)_{i \in [k]}$ if
\begin{itemize}
    \item $Z_{i,j}$ is either empty or of size at least $d - \Delta$, and
    \item $ f^{Z_{i,j}}_2(R'_j, u_i) - f^{Z_{i,j}}_2 (R'_j, c'_j) \leq t \cdot OPT (\mathcal{P})$.
\end{itemize}
To simplify the quantifications over $i,j\in [k]$ where $Z_{i,j}$ appears, we define $t$-useless sets of coordinates to be empty sets when $I_i = \emptyset$ or $I_j \not = \emptyset$.
Intuitively, $Z_{i,j}$ corresponds to a set of coordinates such that setting $u_j$ to be equal to $u_i$ on these coordinates still gives a good partial clustering. 
The whole argument revolves around modifying the extension by ``moving'' some elements of some $R'_i$ into some $R'_j$. However, by doing this we might change the values of $ |f^{Z_{i,j}}_2(R'_j, u_i) - f^{Z_{i,j}}_2 (R'_j, c'_j))|$, especially we might change the centers associated with the extension. The next lemma allows us to have some control of what happens for $t$-useless sets of coordinates when the changes are ``small''.

\begin{lemma}\label{lem:useless_moving}
    Let $(P'_i)_{i \in [k]}$ be an extension with associated centers $(c'_i)_{i \in [k]}$ of value at most $(1+t_1) \cdot OPT(\mathcal{P})$ and $(Z_{i,j})_{i,j \in [k]}$ be $t$-useless sets of coordinates for $(P'_i)_{i\in [k]}$. 
    Let $(P^1_i)_{i\in [k]}$ be another extension of $\mathcal{P}$, and denote by $X$ the set of points of $P$ belonging to different clusters in $(P^1_i)_{i\in [k]}$ and  $(P'_i)_{i\in [k]}$. For every $x \in X$ such that $x \in P'_{r}$ and $x \in P^1_{s}$ for some $r, s \in [k]$, let $F(x) =f_2(x, c'_s) - f_2(x,c'_r)$. If 
    \begin{itemize}
        \item for every $x \in X$ and $r \in [k]$ such that $x \in P^1_r$, $\dom(x) \subseteq I_r$ and 
        \item  $\sum_{x \in X} F(x)\leq t_2 \cdot OPT(\mathcal{P})$,
    \end{itemize}  
    then $(Z_{i,j})_{i,j \in [k]}$ are $(t + t_1 + t_2)$-useless sets of coordinates for $(P^1_i)_{i \in [k]}$.  
\end{lemma}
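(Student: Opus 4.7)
Fix a pair $(i,j)$ with $Z_{i,j}$ non-empty. By the convention in the definition of useless sets, this forces $I_i\neq\emptyset$ and $I_j=\emptyset$; since partial clusterings with $I_j=\emptyset$ also have $H_j=\emptyset$, we get $P'_j=R'_j$ and $P^1_j=R^1_j$. The hypothesis that $\dom(x)\subseteq I_r$ for every $x\in X$ assigned to cluster $r$ in $(P^1_i)$, together with the standing assumption that no point has empty domain, rules out any element of $X$ being placed into cluster $j$ of $(P^1_i)$ (such an $x$ would need $\dom(x)\subseteq I_j=\emptyset$). Hence $R^1_j\subseteq R'_j$, and setting $Y_j:=X\cap R'_j$ we have $R^1_j=R'_j\setminus Y_j$. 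The size condition on $Z_{i,j}$ is inherited unchanged, so only the quantitative inequality
\[ \Delta^1 := f^{Z_{i,j}}_2(R^1_j, u_i) - f^{Z_{i,j}}_2(R^1_j, c^1_j) \leq (t+t_1+t_2)\,OPT(\mathcal{P}) \]
remains to be proven.

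I would split $\Delta^1$ through the ``old'' optimum $f^{Z_{i,j}}_2(R'_j,c'_j)$ and bound the two pieces separately. For the first piece $f^{Z_{i,j}}_2(R^1_j,u_i)-f^{Z_{i,j}}_2(R'_j,c'_j)$, the inclusion $R^1_j\subseteq R'_j$ gives $f^{Z_{i,j}}_2(R^1_j,u_i)\leq f^{Z_{i,j}}_2(R'_j,u_i)$, and the $t$-useless hypothesis for $(P'_i)$ then yields an upper bound of $t\cdot OPT(\mathcal{P})$. For the second piece, applying Lemma~\ref{lem:approx_center_1} with $P=R'_j$ (so $c(R'_j)=c'_j$) and evaluating at $c^1_j$ gives $f^{Z_{i,j}}_2(R'_j,c^1_j)\geq f^{Z_{i,j}}_2(R'_j,c'_j)$, which combined with the disjoint splitting $f^{Z_{i,j}}_2(R'_j,c^1_j)=f^{Z_{i,j}}_2(R^1_j,c^1_j)+f^{Z_{i,j}}_2(Y_j,c^1_j)$ produces
\[ f^{Z_{i,j}}_2(R'_j,c'_j) - f^{Z_{i,j}}_2(R^1_j,c^1_j) \leq f^{Z_{i,j}}_2(Y_j,c^1_j). \]
Hence it suffices to show $f^{Z_{i,j}}_2(Y_j,c^1_j)\leq (t_1+t_2)\,OPT(\mathcal{P})$.

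This last inequality is where both hypotheses on $(P^1_i)$ must be used together, and I see it as the technical heart of the proof. First, for every $x\in X$ placed into $P^1_{s(x)}$, the condition $\dom(x)\subseteq I_{s(x)}$ forces $c^1_{s(x)}$ and $c'_{s(x)}$ to agree on $\dom(x)$, so $f_2(x,c^1_{s(x)})=f_2(x,c'_{s(x)})=f_2(x,u_{s(x)})$. Summing over the clusters therefore gives the ``hybrid'' value
\[ \sum_i f_2(P^1_i,c'_i) = \val\bigl((P'_i)\bigr) + \sum_{x\in X} F(x) \leq (1+t_1+t_2)\,OPT(\mathcal{P}), \]
and since $c^1_i$ is the mean on $[d]-I_i$ we also get $\val((P^1_i))\leq (1+t_1+t_2)\,OPT(\mathcal{P})$. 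To relate $f^{Z_{i,j}}_2(Y_j,c^1_j)$ to these bounded quantities, I would invoke the parallelogram/ANOVA identity expressing $c'_j$ as the weighted average of $c^1_j$ and $c(Y_j)$,
\[ f^{Z_{i,j}}_2(R'_j,c'_j) = f^{Z_{i,j}}_2(R^1_j,c^1_j) + f^{Z_{i,j}}_2(Y_j,c(Y_j)) + \sum_{r\in Z_{i,j}} \tfrac{|\pd(R^1_j,r)|\cdot|\pd(Y_j,r)|}{|\pd(R'_j,r)|}\bigl((c^1_j)_r-(c(Y_j))_r\bigr)^2, \]
and then decompose $f^{Z_{i,j}}_2(Y_j,c^1_j)$ into a within-$Y_j$ variance term (bounded by the hybrid value and hence by $(1+t_1+t_2)\,OPT(\mathcal{P})$) and a between-means term that can be charged to $\sum_{x\in X}F(x)\leq t_2\,OPT(\mathcal{P})$ via the explicit formula for each $y\in Y_j$'s cost in the new clustering $f_2(y,u_{s(y)})$.

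The main obstacle I anticipate is precisely this last step: controlling $f^{Z_{i,j}}_2(Y_j,c^1_j)$ \emph{without} losing a multiplicative constant. A naive squared triangle inequality between $u_i$, $c'_j$ and $c^1_j$ at each coordinate would only yield a bound of the form $2t+2t_1+\cdots$, which is too weak for the additive statement, so the argument must exploit the exact ANOVA decomposition together with the fact that the points of $Y_j$ already contribute to the (bounded) hybrid value through their new assignment.
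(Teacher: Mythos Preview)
Your decomposition is fine up to the point where you reduce everything to
$f^{Z_{i,j}}_2(Y_j,c^1_j)\le (t_1+t_2)\,OPT(\mathcal{P})$, but this intermediate target is \emph{false}, so no ANOVA bookkeeping on $Y_j$ will close the gap. Take $d=1$, $k=2$, $\Delta=0$, $P=\{-1,1,2\}$, with $I_1=\{1\}$, $u_1=2$, $I_2=\emptyset$, and $H_1=H_2=\emptyset$. Then $OPT(\mathcal{P})=1$ (via $(\{1,2\},\{-1\})$). Let $(P'_1,P'_2)=(\{2\},\{-1,1\})$, which has value $2$, so $t_1=1$; the set $Z_{1,2}=\{1\}$ is $8$-useless (so $t=8$). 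Now move the point $1$ to cluster~$1$, giving $(P^1_1,P^1_2)=(\{1,2\},\{-1\})$; here $F(1)=f_2(1,2)-f_2(1,0)=0$, so $t_2=0$. We get $Y_2=\{1\}$, $c^1_2=-1$, and $f^{Z}_2(Y_2,c^1_2)=4$, while $(t_1+t_2)\,OPT=1$. The lemma itself does hold here (with equality, $9\le 9$), but your route overshoots: already the step from the true second piece $f^{Z}_2(R'_j,c'_j)-f^{Z}_2(R^1_j,c^1_j)$ to its upper bound $f^{Z}_2(Y_j,c^1_j)$ discards exactly the slack you would need.

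The paper does not attempt to control $f^{Z}_2(Y_j,c^1_j)$ at all. The missing idea in your argument is the \emph{lower} bound $\val\bigl((P^1_r)_{r},(C_r)_{r}\bigr)\ge OPT(\mathcal{P})$, valid for any centers $C_r$ that agree with $u_r$ on $I_r$. The paper evaluates $(P^1_r)_r$ against the hybrid centers $C_r=c'_r$ for $r\ne j$ and $C_j$ equal to $c^1_j$ on $Z_{i,j}$ and to $c'_j$ elsewhere; combining this lower bound with the exact identity $\sum_r f_2(P^1_r,c'_r)=\val\bigl((P'_r)_r\bigr)+\sum_{x\in X}F(x)$ yields a direct comparison between $f^{Z}_2(P'_j,c'_j)$ and $f^{Z}_2(P^1_j,c^1_j)$ in one stroke, without ever isolating $Y_j$. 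Your proof only ever uses the two \emph{upper} bounds (on $\val((P'_i))$ and on $\sum F(x)$); the additive constant $t_1+t_2$ in the conclusion is exactly the gap between those upper bounds and the floor $OPT(\mathcal{P})$, and you cannot recover it without invoking that floor somewhere.
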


\begin{proof}
    Let $(c^1_i)_{i\in [k]}$ be the centers associated with the extension $(P^1_i)_{i \in [k]}$ and for each $r \in [k]$ let $R^1_r = R \cap P^1_r$. Note that by the definition of $X$, for every $r \in [k]$, every element $x$ of $P^1_r - P'_r$ is such that $\dom(x) \subseteq I_r$. 
    In particular, if for some $r\in [k]$ it holds that $I_r=\emptyset$, then $P^1_r \subseteq P'_r$ (as we assume that no point $x$ in $P$ is such that $\dom(x)=\emptyset$).

     Moreover, the $t$-useless sets of coordinates $Z_{i,j}$, for some $i,j\in [k]$, are only defined in the case that $I_j=\emptyset$.
    This implies that, for all $i,j \in [k]$ such that $Z_{i,j}$ is defined, it holds that: 

    \begin{equation}\label{eq:Zij_subsets}
    	f^{Z_{i,j}}_2(P^1_j, u_i) \leq f^{Z_{i,j}}_2(P'_j, u_i)\quad \text{and} \quad f^{[d]-Z_{i,j}}_2(P^1_j, u_i) \leq f_2^{[d]-Z_{i,j}}(P'_j, u_i).
    \end{equation}

    Suppose now that for some $i,j\in [k]$ such that $Z_{i,j}$ is defined, it holds that $f^{Z_{i,j}}(P^1_j,c^1_j) < f^{Z_{i,j}}(P'_j,c'_j) - (t_1 + t_2)OPT(\mathcal{P})$. We want to reach a contradiction by showing that in this case the value of  $(P^1_i)_{i \in [k]}$ is smaller than $OPT(\mathcal{P})$. For this purpose, let $C_j$ be the element of $\mathbb{R}^d$ equal to $c^1_j$ on the coordinates of $Z_{i,j}$ and to $c'_j$ on the rest of coordinates, and let $C_r = c'_r$ for $r \in [k]\setminus\{j\}$. By the definitions of $X$, $F(\bullet)$, $f_2(\bullet, \bullet)$, and by~(\ref{eq:Zij_subsets}), we have that 
    \begin{equation}\label{eq:value_bound}
    \sum _{r \in [k], r \not = j}f_2(P^1_r, C_r) + f^{[d] - Z_{i,j}}_2(P^1_j, C_j) \leq \sum _{r \in [k], r \not = j}f_2(P'_r, C_r) + \sum_{x \in X} F(x) + f^{[d] - Z_{i,j}}_2(P'_j, C_j).
    \end{equation}
    Note that 
    \[\val\left((P^1_r)_{r \in [k]}, (C_r)_{r \in [k]}\right) = \sum _{r \in [k], r \not = j}f_2(P^1_r, C_r) + f^{[d] - Z_{i,j}}_2(P^1_j, C_j) + f^{Z_{i,j}}_2(P^1_j,c^1_j)\]
    and 
    \[ \val\left((P'_r)_{r \in [k]}, (c'_r)_{r \in [k]}\right) = \sum _{r \in [k], r \not = j}f_2(P'_r, C_r) + f^{[d] - Z_{i,j}}_2(P'_j, C_j) + f^{Z_{i,j}}_2(P'_j,c'_j).\]
    Hence the inequality~(\ref{eq:value_bound}) together with our assumption implies that \[\val((P^1_r)_{r \in [k]}, (C_r)_{r \in [k]}) < \val((P'_r)_{r \in [k]}, (c'_r)_{r \in [k]}) + t_2 \cdot OPT(\mathcal{P}) - (t_1 + t_2)OPT(\mathcal{P}) < OPT(\mathcal{P}), \]
    a contradiction. 
    This means that \[\left(f^{Z_{i,j}}_2(P'_j,c'_j) - f^{Z_{i,j}}_2 (P^1_j, c'_j)\right) \leq (t_1 + t_2)OPT(\mathcal{P}),\] and thus
    \begin{align*}
        f^{Z_{i,j}}_2(P^1_j, u_i) - f^{Z_{i,j}}_2 (P^1_j, c'_j) &\leq f^{Z_{i,j}}_2(P'_j, u_i) - f^{Z_{i,j}}_2 (P^1_j, c'_j)\\
        &\leq \left(f^{Z_{i,j}}_2(P'_j, u_i) - f^{Z_{i,j}}_2(P'_j,c'_j)\right) +\left(f^{Z_{i,j}}_2(P'_j,c'_j) - f^{Z_{i,j}}_2 (P^1_j, c'_j)\right)\\
        & \leq (t + t_1 + t_2) \cdot OPT (\mathcal{P}).\tag*{\qedhere}
    \end{align*}

\end{proof}

An important part of the proof of Lemmas \ref{lem:balls_distance} and \ref{lem:balls_distance2} is that the extension we consider is optimal. This allowed us to say that picking balls of small radius around $u_i$ will contain only elements of $P'_i$. Since we have to handle extensions which are not optimal, we have to require similar properties for them. This is the role of the next definition. 

For an extension $(P'_i)_{i \in [k]}$ with associated centers $(c'_j)$ and a $t$-useless set of coordinates $Z_{i,j}$ for some $i,j \in [k$], we say that $Z_{i,j}$ is \textit{compatible} with $(P'_i)_{i\in [k]}$ if 

\begin{itemize}
    \item there is no element $x \in P'_j$ such that $\dom(x) \not \subseteq I_j$ and $\dom(x) \subseteq Z_{i,j}$; and
    \item there is no element $x \in P'_j$ such that $\dom(x) \subseteq I_i$ and $d(x,u_i) < \frac{d(x, c'_j)}{2}$. 
\end{itemize} 

Note that the factor $1/2$ here might seems strange, since if $d(x,u_i) \leq d(x,c'_j)$, putting $x$ in $P'_i$ decreases the value of the extension. However, the problem with the definition without the factor $1/2$ would be the following: suppose we have an extension $(P'_i)_{i \in [k]}$, a set of $t$-useless sets of coordinates $Z_{i,j}$ and the goal is to make $Z_{i, j}$ compatible with  $(P'_i)_{i \in [k]}$. To do so, suppose we move iteratively elements $x \in P'_j$ such that $\dom(x) \subseteq I_i$ and $d(x,u_i) < d(x,c'_j)$  to $P'_i$ (where $c'_j$ is the updated center), until there are no such elements. The value of the extension can only decrease after this modification. However, we have to show that the $(Z_{i,j})$ are still $t'$ useless for some bounded $t'$. As we have seen in Lemma \ref{lem:useless_moving}, if we have some upper bound on the sum of $f_2(x, c'_r) - f_2(x,c'_j)$ over the elements $x$ which are moved, then we achieve our goal. If $d(x,u_i) < d(x,c'_j)$, it doesn't seem like we can have control over these values, however if $d(x,u_i) < \frac{d(x, c'_j)}{2}$ every time we move an element $x$, the value of the extension decrease by at least $f_2(x, c'_r)/2$. This means that if we start with an extension of value $(1 + t_1)OPT(\mathcal{P})$, then the sum of $f_2(x, c'_r)$ over the elements $x$ which are moved is bounded by $2t_1 OPT(\mathcal{P})$ and we can apply Lemma \ref{lem:useless_moving}. This will be the main argument of the next lemma.

\begin{lemma}\label{lem:useless_next}
    Suppose $(P'_i)_{i \in [k]}$ is a safe extension of value at most $(1+t_1) OPT(\mathcal{P})$ with associated centers $(c'_j)$ and $(Z_{i,j})_{i,j \in [k]}$ are compatible $t_2$-useless sets of coordinates for $(P'_i)_{i\in [k]}$. If there exist $i,j \in [k]$ with $I_i \not = \emptyset$ and $I_j= \emptyset$ as well as a set $I_{i,j}$ of coordinates in $I_i$ such that $|I_{i,j}| \geq d - \Delta$, $I_{i,j} \not \subseteq  Z_{i,j}$, and $f^{I_{i,j}}_2(P'_j, u_i) - f^{I_{i,j}}_2(P'_j, c'_j) \leq t_3\cdot OPT(\mathcal{P})$, then there exists a safe extension $(P^1_i)_{i \in [k]}$ of $\mathcal{P}$ with compatible $3t_1 + 2t_3 + t_2$-useless sets of coordinates $(Z^1_{i,j})_{i,j}$ such that the value of $(P^1_i)_{i\in [k]}$ is at most $(1 + t_1 + t_3)OPT(\mathcal{P})$ and $\sum_{i,j \in [k]}|Z^1_{i,j}| >\sum_{i,j \in [k]}|Z'_{i,j}|$.
\end{lemma}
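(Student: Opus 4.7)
I would prove the lemma by (i) enlarging the given useless sets, (ii) iteratively repairing the extension so that the enlarged collection becomes compatible, and (iii) applying Lemma~\ref{lem:useless_moving} to transfer usefulness to the new extension.

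Concretely, set $Z^1_{i,j}:=Z_{i,j}\cup I_{i,j}$ and keep $Z^1_{i',j'}:=Z_{i',j'}$ at every other pair. Since $I_{i,j}\not\subseteq Z_{i,j}$ the total size strictly increases, and each $Z^1_{i',j'}$ remains either empty or of size at least $d-\Delta$. Because $I_j=\emptyset$ we have $c'_j=c(P'_j)$, so by Lemma~\ref{lem:approx_center_1} the quantity $f_2^S(P'_j,u_i)-f_2^S(P'_j,c'_j)$ decomposes coordinatewise as $\sum_{r\in S}|\pd(P'_j,r)|((u_i)_r-(c'_j)_r)^2$ for every $S\subseteq I_i$. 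Splitting the range $Z^1_{i,j}$ as $Z_{i,j}\cup(I_{i,j}\setminus Z_{i,j})$ and using the old useless bound ($\leq t_2\cdot OPT$) and the hypothesis ($\leq t_3\cdot OPT$), this shows that $(Z^1_{i',j'})$ is $(t_2+t_3)$-useless for $(P'_\ell)$.

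I then construct $(P^1_\ell)$ from $(P'_\ell)$ in two phases. In Phase~1, I move every $x\in P'_j$ with $\dom(x)\subseteq Z^1_{i,j}$ from $P'_j$ into cluster $i$; this restores compatibility condition~(a) for $(i,j)$. In Phase~2, I iteratively repair any remaining violation of safety or of compatibility~(b) for any pair: while some $x$ in the current $P_\ell$ satisfies either $\dom(x)\subseteq I_\ell\cap I_m$ with $d(x,c^{\mathrm{curr}}_\ell)>d(x,c^{\mathrm{curr}}_m)$ (safety) or $I_\ell=\emptyset$, $\dom(x)\subseteq I_{i'}$, and $d(x,u_{i'})<\tfrac12 d(x,c^{\mathrm{curr}}_\ell)$ (compatibility~(b)), I move $x$ to the offending cluster. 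Every such move strictly decreases the extension value, and each compatibility~(b) move does so by at least $\tfrac12 f_2(x,c^{\mathrm{curr}}_\ell)$. Since Phase~2 never places points into a cluster with $I=\emptyset$, it does not disturb compatibility~(a) at any pair; the procedure terminates with $(P^1_\ell)$ safe and $(Z^1_{i',j'})$ compatible.

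The value bound $\val(P^1_\ell)\leq(1+t_1+t_3)OPT$ reduces to bounding Phase~1's cost, because Phase~2 only decreases the value. Using the comparison center $c^{**}$ with $c^{**}_j:=u_i$ on $I_{i,j}$ and $c^{**}_\ell:=c'_\ell$ elsewhere, careful coordinatewise accounting (exploiting $\dom(x)\subseteq Z^1_{i,j}\subseteq I_i$ for the moved points and the fact that $c'_j=c(P'_j)$ makes the contributions over $Z_{i,j}\setminus I_{i,j}$ telescope away) gives $\val(P^{0.5}_\ell,c^{**})\leq\val(P'_\ell,c')+t_3\cdot OPT\leq(1+t_1+t_3)OPT$. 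For the useless parameter, I apply Lemma~\ref{lem:useless_moving} with the $(t_2+t_3)$-useless sets $(Z^1)$ for $(P'_\ell)$: condition~(1) holds because every moved $x$ has $\dom(x)\subseteq I_s$ for its destination $s$, and condition~(2) $\sum_{x\in X}F(x)\leq(2t_1+t_3)OPT$ follows by splitting per phase — Phase~1 contributes at most $t_3\cdot OPT$ (by the same computation used for the value bound) and Phase~2 contributes at most $2t_1\cdot OPT$ by the doubling argument sketched before the lemma (the total Phase~2 value drop is at most $t_1\cdot OPT$, so $\sum f_2(x,c^{\mathrm{curr}}_\ell)\leq 2t_1\cdot OPT$, and the compatibility condition forces $F(x)\leq f_2(x,u_{i'})\leq\tfrac14 f_2(x,c^{\mathrm{curr}}_\ell)$). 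Summing yields a $(t_2+t_3)+t_1+(2t_1+t_3)=3t_1+2t_3+t_2$-useless bound.

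The main technical obstacle is threading Phase~1's cost into $t_3\cdot OPT$ rather than the looser $(t_2+t_3)OPT$ that results from naively invoking the $(t_2+t_3)$-useless property of $(Z^1)$ for $(P'_\ell)$; the fix is the specific comparison center $c^{**}$ that flips to $u_i$ only on $I_{i,j}$ (not on all of $Z^1_{i,j}$), so that the $Z_{i,j}\setminus I_{i,j}$ contributions cancel. A secondary difficulty is coordinating the two kinds of Phase~2 moves and verifying that they preserve compatibility~(a) for every unchanged pair $(i',j)$ with $I_{i'}\neq\emptyset$, which follows from the observation that Phase~2 never adds points to any cluster with empty $I$.
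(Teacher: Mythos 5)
Your proposal follows essentially the same route as the paper: enlarge $Z_{i,j}$ to $Z_{i,j}\cup I_{i,j}$, repair the extension by first moving the points of $P'_j$ fully defined on the enlarged set into $P'_i$ and then iteratively moving points violating the half-distance condition, and finally invoke Lemma~\ref{lem:useless_moving} with the potential/doubling argument to bound $\sum_x F(x)$. The only slip is that the Phase-2 value drop is bounded by $(t_1+t_3)OPT(\mathcal{P})$ rather than $t_1\cdot OPT(\mathcal{P})$ (Phase 1 may raise the value to $(1+t_1+t_3)OPT(\mathcal{P})$), but this is absorbed by the sharper $\tfrac34$/$\tfrac14$ constants you already have available, and the paper's own constants here are equally loose.
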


\begin{proof}
    Define $Z^1_{i,j} = Z_{i,j} \cup I_{i,j}$, and note that $|Z^1_{i,j}| > |Z_{i,j}|$ and $|Z^1_{i,j}| \geq |d - \Delta|$. Let $Z^1_{i',j'} = Z_{i',j'}$ for any other pair $i',j' \in [k]$. Note that $Z^1_{i',j'}$ is still a compatible $t$-useless set of coordinates for $(P'_i)_{i \in [k]}$. Because $f^{I_{i,j}}_2(P'_j, u_i) - f^{I_{i,j}}_2(P'_j, c'_j) \leq t_3\cdot OPT(\mathcal{P})$ and $Z_{i,j}$ is $t_2$-useless, we get that $Z^1_{i,j}$ is a $t_3 + t_2$ useless set of coordinates for $(P'_i)_{i\in [k]}$.
    
    Consider now the extension $(P^1_r)_{r \in [k]}$ obtained from $(P'_r)_{r \in [k]}$ by putting in $P'_i$ the set $X_1$ of elements $x \in P'_j$ such that $\dom(x) \not \subseteq I_j$ and $\dom(x) \subseteq Z^1_{i,j}$. Let us define a sequence $(P^2_r)_{r \in [k]}, \ldots (P^q_r)_{r \in [k]}$ of extensions with the associated centers $(c^2_r)_{r \in [k]}$, \ldots, $(c^q_r)_{r \in [k]}$. For each $s \in [q - 1]$, we obtain the extension $(P^{s+1}_r)_{r \in [k]}$ from $(P^s_r)_{r \in [k]}$ as follows. 
    If there is an element in $P^s_j$ such that $\dom(x) \subseteq I_{i'}$ for some $i' \in [k]$ and $d(x,u_{i'}) \leq \frac{d(x,c^s_j)}{2}$ then we get $(P^{s+1}_r)_{r \in [k]}$ from $(P^s_r)_{r \in [k]}$ by moving $x$ from $P^s_j$ to $P^s_{i'}$. If there are multiple choices, we simply take any $ i'$ such that $d(x,u_{i'})$ is minimal. Note that throughout this process, we are only removing elements from $P^s_j$ to put it in another $P^s_{i'}$, which means that this process has to stop after at most $|P^1_j|$ steps. Let $(P''_r)_{r \in [k]}$ be the final extension of this process.
    Note as well that since we only add to $ P^s_{i'}$ elements $x$ such that $\dom(x) \in I_{i'}$, it means that $c^s_{i'} = c'_{i'}$ for every $i'$ different from $j$. Denote by $X_2$ the set of all elements we moved between $(P^1_r)_{r \in [k]}$ and $(P''_r)_{r \in [k]}$, that is, $X_2=P^1_j - P''_j$. For $x \in X_2$ such that $x \in P''_r$, let $f(x) = f_2(x,c'_r)$. 

    \begin{claim}
        $\sum_{x \in X_2} f(x) \leq 2(t_1 + t_3)OPT(\mathcal{P})$. 
    \end{claim}

    \begin{cProof}
       Since $f^{I_{i,j}}_2(P'_j, u_i) - f^{I_{i,j}}_2(P'_j, c'_j) \leq t_3\cdot OPT(\mathcal{P})$, we have that $(P^1_r)_{r \in [k]}$ has value at most $(1+t_1 + t_3)OPT(\mathcal{P})$. However, suppose $x$ is such that $x \in P^s_j$ and $x \in P^{s+1}_r$ for some $r \in [k]\setminus j$. It means that $d(x,u_r) \leq \frac{d(x, c(P^s_j))}{2}$ and: \[\val((P^{s+1}_r)_{r \in [k]}) = \val((P^{s}_r)_{r \in [k]}) + (d(x,u_r)^2 - d(x, c(P^s_j))^2) \leq val((P^{s}_r)_{r \in [k]}) - \frac{3f(x)}{4}.\] Since $\val((P''_i)_{i\in [k]}) \geq OPT(\mathcal{P})$, this ends the proof of the claim. 
    \end{cProof}

    Overall, if we set $X := X_1 \cup X_2$, then we can apply Lemma \ref{lem:useless_moving} to show that each $Z^1_{i',j'}$ is a $3t_1 + 2t_3 + t_2$-useless set of coordinates for $i',j' \in [k]$. 
    Moreover, since we only move elements of $P'_j$ and $c'_r = c''_r$ for every other $r \in [k]\setminus j$, the extension $(P''_r)_{r \in [k]}$ remains safe. Finally we can verify that the $(Z^1_{i',j'})$ are compatible with $(P''_r)_{r \in [k]}$ as well. Indeed, for every $i' \in [k]$,  $Z^1_{i',j}$ is compatible with $(P''_r)_{r \in [k]}$ by the definition of $X$. For every $j' \not = j$ and $i' \in [k]$, we have that $Z^1_{i',j'} = Z_{i',j'}$ and thus since $Z_{i',j'}$ is compatible with $(P'_i)_{i \in [k]}$ and the elements $x$ of $P''_{j'} - P'_{j'}$ are such that $\dom(x) \subseteq I_{j'}$, there is no element $x \in P''_{j'}$ such that $\dom(x) \not \subseteq I_{j'}$ and $\dom(x) \subseteq Z^1_{i', j'}$. Moreover, since $c''_{j'} = c'_{j'}$, we have $d(x,c''_{j'}) = d(x,c'_{j'})$, which means that there is no element of $P'_j$ such that $d(x,u_{i'}) \leq \frac{d(x,c''_j)}{2}$. Finally for every $x \in P''_{j'} - P'_{j'}$, $d(x, c''_{j'}) = d(x, u_{j'})$, which ends the proof, as we chose $j'$ as the index such that $\dom(x) \subseteq I_{j'}$ and $ d(x, u_{j'})$ is minimal. 
\end{proof}

The next lemma is the main technical part of this subsection. 

\begin{lemma}\label{lem:partial_assignment_induc}
    Let $t\in \mathbb{R}^+$, let $(P'_i)_{i\in [k]}$ be a safe extension of $\mathcal{P}$ with value at most $(1 + t)OPT(\mathcal{P})$ and for every $i,j \in [k]$ let $Z_{i,j}$ be a compatible $t$-useless set of coordinates, such that $(Z_{i, j})$ are compatible with $(P'_i)_{i \in [k]}$. One of the following holds:
    \begin{itemize}
        \item Either $Z_{i,j} = I_i$ for one pair $i, j \in [k]$ such that $I_i \not = \emptyset$ and $I_j = \emptyset$; or
        \item There exists $i, j\in [k]$ such that $I_i$ and $I_j$ are nonempty and $f^{i^j_1}_2(P'_j, u_i) - f^{i^j_1}_2(P'_j, c'_j) \leq t\cdot f_2(R'_i, u_i)$ for some index $i^j_1 \in I_i - I_j$; or
        \item There is an algorithm running in time $\Oh(nkd)$ that returns for every $i \in [k]$ a set $T'_i$ such that $T'_i \subseteq \fd(R'_i, I_i)$ and $|\pd(P'_j, I_i - I_j) - T'_i| \geq \frac{t}{32} |\fd(R'_i, I_i) - T'_i|$ for some $j \in [k]$ with probability at least $(\frac{1}{\log(n)})^k$; or 
        \item There exists an extension $(P^1_i)_{i \in [k]}$ of $\mathcal{P}$ with compatible $(5t + t^2)$-useless sets of coordinates $(Z^1_{i,j})_{i,j\in [k]}$ such that the value of $(P^1_i)_{i \in [k]})$ is at most $(1 + \frac{3t+t^2}{2}) OPT(\mathcal{P})$\ee{I updated the constants according to change in Lemma 12, somebody might want to check it ...} and $\sum_{i,j \in [k]}|Z^1_{i,j}| > \sum_{i,j \in [k]}|Z_{i,j}|$.
    \end{itemize}
\end{lemma}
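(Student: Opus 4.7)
I would assume that neither of the first two alternatives of the lemma holds and under that assumption show that either the third or the fourth must. For each pair $i,j\in[k]$ with $I_i\ne\emptyset$ and $I_i\not\subseteq I_j$, define a critical distance $\rho_{i,j}$ following the template of the Overview subsection: if $I_j\ne\emptyset$, order $I_i-I_j$ as $i^j_1,\dots,i^j_r$ (with $r=|I_i-I_j|\le\Delta$) by decreasing one-coordinate distances $d^{i^j_s}(u_i,c'_j)$ and set $\rho_{i,j}=d^{i^j_r}(u_i,c'_j)$; if $I_j=\emptyset$, order $I_i$ analogously, let $J'_j=\{i^j_{\Delta+1},\dots,i^j_{|I_i|}\}$, and set $\rho_{i,j}=d^{J'_j}(u_i,c'_j)$. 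Let $j(i)$ be a minimizer of $\rho_{i,j}$, and let $B_i$ be the set of elements of $\fd(R,I_i)$ at distance at most $\rho_{i,j(i)}/4$ from $u_i$.

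\textbf{Per-index dichotomy.} For each $i$ with $I_i\ne\emptyset$, invoke Lemma~\ref{lem:distance1} (when $I_{j(i)}=\emptyset$) or Lemma~\ref{lem:distance2} (when $I_{j(i)}\ne\emptyset$) with $c=t/16$, after adapting their proofs to express the right-hand side in terms of $f_2(R'_i,u_i)$ rather than $f_2(P'_i,u_i)$; this is legitimate since each $x\in\fd(R'_i,I_i)-B_i$ contributes at least $(\rho_{i,j(i)}/4)^2$ to $f_2(R'_i,u_i)$. The dichotomy reads: either the \emph{good sampling case} $|\pd(R'_{j(i)},\cdot)-B_i|\ge(t/16)|\fd(R'_i,I_i)-B_i|$, or the \emph{bad case} $f^{\cdot}_2(P'_{j(i)},u_i)-f^{\cdot}_2(P'_{j(i)},c'_{j(i)})\le t\cdot f_2(R'_i,u_i)$ on the relevant coordinates (with the passage from $R'_{j(i)}$ to $P'_{j(i)}$ free because coordinates of $I_i-I_{j(i)}$ are outside $\dom(x)$ for every $x\in H_{j(i)}$). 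If $I_{j(i)}\ne\emptyset$, the bad case is exactly the second alternative of the lemma; since we assumed it fails, the good case must hold.

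\textbf{Assembling the conclusion.} If some $i$ falls in the bad case (which forces $I_{j(i)}=\emptyset$), invoke Lemma~\ref{lem:useless_next} with $I_{i,j(i)}=J'_{j(i)}$, or, if $J'_{j(i)}\subseteq Z_{i,j(i)}$, with an enlargement by one coordinate taken from $I_i-Z_{i,j(i)}$ (nonempty because the first alternative fails). Taking $t_1=t$, $t_2=t$, and $t_3=O(t)$ from the bad-case bound yields a new safe extension with strictly larger total useless-set size and with the constants required by the fourth alternative. Otherwise every $i$ with $I_i\ne\emptyset$ lies in the good case; since $\rho_{i,j(i)}$ is unknown, I would guess $r_i\in[\lfloor\log n\rfloor]$ for each such $i$ and take $T'_i$ to be the $\lfloor n/2^{r_i}\rfloor$ closest elements of $\fd(R,I_i)$ to $u_i$ that Lemmas~\ref{lem:balls_distance} and \ref{lem:balls_distance2} certify to lie in $\fd(R'_i,I_i)$. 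With probability $1/\log n$ per index the set $T'_i$ approximates $B_i\cap\fd(R'_i,I_i)$ to within a factor of two, yielding the required ratio $t/32$ (losing a factor two relative to the Lemma~\ref{lem:distance2} ratio $t/16$) and total success probability $(1/\log n)^k$, i.e.\ the third alternative.

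\textbf{Main obstacle.} The subtlest step is the bad case when $J'_{j(i)}\subseteq Z_{i,j(i)}$: one must exhibit some $I_{i,j(i)}$ of size at least $d-\Delta$ not contained in $Z_{i,j(i)}$ whose cost bound remains $O(t)\cdot OPT(\mathcal P)$, so that Lemma~\ref{lem:useless_next} applies. Combining the $t$-uselessness of $Z_{i,j(i)}$ (to bound the cost on the coordinate being swapped out) with the bad-case bound on $J'_{j(i)}$ yields such a swap while losing only a controlled additive $O(t)\cdot OPT(\mathcal P)$. A second, purely arithmetic task is tracing the exact constants through the compositions of Lemmas~\ref{lem:useless_moving} and~\ref{lem:useless_next}, with $3t_1+2t_3+t_2$ and $t_1+t_3$ bookkeeping, to recover precisely the $(5t+t^2)$-useless conclusion and the value bound $(1+(3t+t^2)/2)\cdot OPT(\mathcal P)$ claimed in the fourth alternative.
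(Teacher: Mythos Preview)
Your overall strategy matches the paper's, but there are two genuine gaps.

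\textbf{Optimality vs.\ compatibility/safeness.} Lemmas~\ref{lem:balls_distance}, \ref{lem:distance1}, \ref{lem:balls_distance2}, \ref{lem:distance2} from the overview are proved for an \emph{optimal} extension: their key step is ``$x\in R_j$ implies $d(x,c'_j)\le d(x,c'_i)$''. Here $(P'_i)_{i\in[k]}$ is only safe with compatible useless sets, not optimal, so these lemmas do not apply as stated. The paper re-proves them as internal claims using the hypotheses actually available: for $j$ with $I_j=\emptyset$ it uses the two compatibility conditions on $Z_{i,j}$ (the first gives $\dom(x)\cap(I_i-Z_{i,j})\ne\emptyset$, the second gives $d(x,c'_j)\le 2\,d(x,u_i)$), and for $j$ with $I_j\ne\emptyset$ it uses safeness. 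Your proposal only mentions adapting the right-hand side from $f_2(P'_i,u_i)$ to $f_2(R'_i,u_i)$ and never addresses this, which is the more substantial adaptation.

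\textbf{The ``main obstacle'' swap does not work; the paper avoids it up front.} In your bad case with $I_{j(i)}=\emptyset$ and $J'_{j(i)}\subseteq Z_{i,j(i)}$, you propose to swap one coordinate of $J'_{j(i)}$ for some coordinate of $I_i-Z_{i,j(i)}$. But any coordinate of $I_i-Z_{i,j(i)}$ that is not in $J'_{j(i)}$ lies among the $\Delta$ \emph{largest} one-coordinate distances $d^{r}(u_i,c'_j)$; its contribution $|\pd(P'_j,r)|\cdot d^{r}(u_i,c'_j)^2$ is bounded neither by the $t$-uselessness of $Z_{i,j(i)}$ (that controls only coordinates inside $Z_{i,j(i)}$) nor by the bad-case bound (that controls only $J'_{j(i)}$). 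So the swap can blow up $t_3$ and Lemma~\ref{lem:useless_next} no longer yields the claimed constants. The paper's fix is to build the non-containment into $I_{i,j}$ from the start: first pick $i_{j,\min}\in I_i-Z_{i,j}$ minimizing $d^{r}(u_i,c'_j)$ (nonempty because the first alternative fails), then let $I_{i,j}$ be $I_i$ minus the $\Delta$ largest-distance coordinates of $I_i\setminus\{i_{j,\min}\}$. This forces $i_{j,\min}\in I_{i,j}$, hence $I_{i,j}\not\subseteq Z_{i,j}$, while the choice of $i_{j,\min}$ as a \emph{minimum} over $I_i-Z_{i,j}$ together with the first compatibility condition is exactly what makes the balls-distance argument go through. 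With $c=t/32$ this yields $t_3=16c(1+t)=\tfrac{t+t^2}{2}$, and plugging $t_1=t_2=t$ into Lemma~\ref{lem:useless_next} gives precisely the $(5t+t^2)$ and $(1+\tfrac{3t+t^2}{2})$ constants of the fourth alternative.
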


\begin{proof}
    Let $c = t/32$.
    If $Z_{i,j} = I_i$ for one pair $i, j \in [k]$ such that $I_j = \emptyset$ and $I_i \not = \emptyset$, then nothing needs to be done, so let us assume this is not the case. Let $F$ be the set of indices $i \in [k]$ such that $I_i$ is empty and $G = [k] - F$.

    Let $i$ be some index of $G$. 
    For any $j \in F$, let us denote $i_{j,min}$ the element of $I_i - Z_{i,j}$ such that $d^{i_{j,min}}(u_i, c(P'_j))$ is minimal. Let us now denote by $r_{1,j}, \dots, r_{\Delta,j}$ the $\Delta$ coordinates of $I_i - i_{j,min}$ such that the $d^{r_{s,j}}(u_i, c(P'_j) )$ for $s\in [\Delta]$ are the $\Delta$ maximal values among all $d^{r}(u_i, c(P'_j) )$ for $r \in I_i - i_{j,min}$, and set $I_{i,j} = I_i - \{ r_1, \dots, r_{\Delta} \}$. Let $d^j = d^{I_{i,j}}(u_i, c(P'_j))$. The following claim is the analogue of Lemma \ref{lem:balls_distance}, but using the fact that $Z_{i,j}$ is compatible with $(P'_r)_{r \in [k]}$ instead of the extension being optimal.

    \begin{claim}\label{clm:balls_distance1}
        For any $x \in P'_j$ such that $\dom(x) \subseteq I_i$, $d(x,u_i) > d^j/4$. 
    \end{claim}

    \begin{cProof}
        Let $x$ be an element of $P'_j$ such that $\dom(x) \subseteq I_i$ and note that $\dom(x) - I_i <\Delta$. Because $Z_{i,j}$ is compatible with $(P'_i)_{i\in [k]}$, we have that $\dom(x) \cap (I_i - Z_{i,j})$ is nonempty, and thus, by the choice of $I_{i,j}$,
        we have that $d^{\dom(x)}(u_i, c'_j) \geq d^{I_{i,j}}(u_i, c'_j)$. However, because  $Z_{i,j}$ is compatible with $(P'_i)_{i\in [k]}$ and $\dom(x) \subseteq I_i$, we have that $d^{\dom(x)}(x,c'_j) \leq 2d(x,u_i)$. Therefore, if $d(x,u_i) \leq  d^{I_{i,j}}(u_i, c'_j)/4 \leq d^{\dom(x)}(u_i, c'_j)/4$, the triangle inequality gives us:  \[d^{\dom(x)}(u_i, c'_j) \leq d^{\dom(x)}(u_i, x) + d^{\dom(x)}(x, c'_j) < d^{\dom(x)}(u_i, c'_j), \]
        where the triangle inequality applies because no coordinates of $x, u_i$ and $c'_j$ has value \? on $\dom(x)$. This is a contradiction and thus ends the proof. 
    \end{cProof}
    We can also show the following: 

    \begin{claim}\label{clm:distance1}
        For every constant $c\in \mathbb{R}^+$ and $j \in F$, the set $B_j$ of elements of $\fd(R,I_i)$ at distance at most $d^j/4$ from $u_i$ is such that one of the following properties is satisfied:
       \begin{itemize}
           \item  $|P'_j - B_j| \geq c |\fd(R'_i, I_i) - B_j|$; or
           \item  $f^{I_{i,j}}_2(P'_j, u_i) - f^{I_{i,j}}_2(P'_j, c'_j) \leq 16 c( 1+ t)OPT(\mathcal{P})$.
       \end{itemize}
       \end{claim}
   
       \begin{cProof}
            Indeed, suppose that $|P'_j - B_i| < c |\fd(R'_i, I_i) - B_j|$ . We know that $f_2(P'_i, u_i) \geq |\fd(R'_i, I_i) - B| (d^j/4)^2$ as all the points of $\fd(R'_i, I_i) - B_i$ are at distance at least $d^j/4$ from $u_i$. Because of 
            Claim \ref{clm:balls_distance1}, we have that $|P'_j - B_j| = |P'_j| \leq c|\fd(R'_i, I_i) - B_i|$. Moreover, since $I_j$ is empty, it means that $c'_j = c(P'_j)$ and thus, by Lemma \ref{lem:approx_center_1}, we have that $f_2(P'_j, u_i) - f_2(P'_j, c'_j) \leq |P'_j|d(u_i,c'_j)^2$, which implies that $f_2(P'_j, u_i) - f_2(P'_j, c'_j) \leq 16c \cdot f_2(P'_i,u_i) \leq 16c(1 + t)OPT(\mathcal{P})$.
       \end{cProof}
       If $f^{I_{i,j}}_2(P'_j, u_i) - f^{I_{i,j}}_2(P'_j, c'_j) \leq 16 c( 1+ t)OPT(\mathcal{P})$, then Lemma \ref{lem:useless_next} gives us the existence of an extension and some sets of coordinates satisfying the fourth property of the lemma. Therefore, from now on we assume that for every $j \in F$, $|P'_j - B_j| \geq c |\fd(R'_i, I_i) - B_j|$.

    For any $j \in G$ different from $i$ such that $I_i - I_j \not = \emptyset$, let us define $i^j_1, \dots, i^j_{|I_i - I_j|}$ as the coordinates of $I_i - I_j$, and set $d^j_r = d^{i^j_r}(u_i, c'_j)$ for all $r \in [|I_i - I_j|]$. Without loss of generality, we can assume $d^j_1$ is minimum, and let $d^j = d^j_1$. Using analogous proofs as those of Lemmas \ref{lem:balls_distance2} and \ref{lem:distance2}, we can use the fact that $(P'_i)_{i\in [k]}$ is safe to prove the following two claims: \ee{Do we want to say that the proof is the same at least? Or do we maybe want to define safe extension earlier and just state the lemmas with a safe extension?}

    \begin{claim}\label{clm:balls_distance2}
       For $j \in G$, if $x \in \pd(P'_j, I_i - I_j)$, then $d(x,u_i) \geq d^j/4$. 
    \end{claim}
   
    \begin{claim}\label{clm:distance2}
        For every constant $c\in\mathbb{R}^+$ and $j \in G$, if we denote by $B_j$ the set of elements of $\fd(R,I_i)$ at distance at most $d^j$ from $u_1$, then: 
        \begin{itemize}
            \item Either $|\pd(P'_j, i^j_1) - B_j| \geq c |R'_i - B|$; or
            \item $f^{i^j_1}_2(P'_j, u_i) - f^{i^j_1}_2(P'_j, c'_j) \leq 16c f_2(P'_i, u_i)$.
        \end{itemize}
    \end{claim}
    Note that if there exists $j$ such that $f^{i^j_1}_2(P'_j, u_i) - f^{i^j_1}_2(P'_j, c'_j) \leq 16c f_2(R'_i, u_i)$, then the second property of the lemma is satisfied, so we can assume that $|\pd(P'_j, i^j_1) - B_j| \geq c |R'_i - B_j|$ for all $j \in G$.

    Let us consider now the index $j \in G \cup F$ such that $d^j$ is defined and minimal. Note that if no $d^j$ is defined, it means that $I_i - I_j$ is empty for every $j$ and thus because of Observation \ref{obs:1}, we can assume that $\fd(R'_i, I_i)$ is empty. Let us denote by $B_i$ the set of elements $x$ of $\fd(R, I_i)$ at distance at most $d^j/4$ from $u_i$ and such that, if $\dom(x) \subseteq I_j$ for some $j \in [k]$, then $d(x,u_i) \leq d(x, u_j)$. By combining Claims \ref{clm:balls_distance1} and \ref{clm:balls_distance2}, the choice of $j$ and the assumptions we made on the results of Claims \ref{clm:distance2} and \ref{clm:distance1} we get the following claim:

    \begin{claim}\label{clm:empty_balls}
         $B_i$ contains only elements of $R'_i$ and $|\pd(P'_j, I_1 - I_j) - B_i| \geq c |\fd(R'_i, I_i) - B_i|$.   
    \end{claim} 
     Consider the integer $r$ such $\frac{n}{2^{r}} \leq |\fd(R,I_i) - B_i| \leq \frac{n}{2^{r-1}}$, let $H$ be the set of $\frac{n}{2^{r-1}}$ points of $\fd(R,I_i)$ that are the farthest away from $u_1$ and let $B'_i = \fd(R,I_i) - H$. By the definition of $H$,  $B'_i \subseteq B_i$ and $|R_i - B'| \leq 2|R_i - B|$ which means that $|\pd(P'_j,
     I_i - I_j) - B'_i| \geq \frac{c|R_i - B'|}{2}$. Therefore, if the algorithm selects uniformly at random an integer in $[\log(n)]$, then with probability $1/\log(n)$ this integer is equal to $r$, and the algorithm is then able to find the set $T'_i := B'_i$. Note that  once $r$ is selected, the set $B'$ can be found in $\Oh(nd)$ time. We finish the proof by repeating this for every $i$ such that $d^j$ can be defined. For the other indices $i$, as explained, $\fd(R'_i, I_i)$ is empty and thus $B'_i := \emptyset$ has the required properties. 
\end{proof}

A very important remark here is that in the case where there is an algorithm running in time $\Oh(nkd)$ that returns for every $i \in [k]$ a set $T'_i$ such that $T'_i \subseteq \fd(R'_i, I_i)$ and $|\pd(P'_j, I_i - I_j) - T'_i| \geq \frac{t}{32} |\fd(R'_i, I_i) - T'_i|$ for some $j \in [k]$ with probability at least $(\frac{1}{\log(n)})^k$, the algorithm does not need to know the extension $(P'_i)_{i \in [k]}$, as the only thing that matters are the distances of the elements of $P$ to the point $u_i$ that are given in $\mathcal{P}$.

\subsection{Proof of Lemma~\ref{lem:partial_assignment}}

We are now ready to prove the main result of this section, Lemma \ref{lem:partial_assignment}. 

\begin{proof}[Proof of Lemma \ref{lem:partial_assignment}]
    Let $t_1 = \frac{\alpha}{6^{(\Delta+1)k}}$
    and note that $6^{{(\Delta+1)k}}t_1 = \alpha$.
    Let $(P^1_i)_{i\in [k]}$ be an optimal extension of $\mathcal{P}$ and $Z^1_{i,j} = \emptyset$ for every pair $i, j \in [k]$. Note that $(P^1_i)_{i\in [k]}$ is safe and the $(Z^1_{i,j})$ are compatible $t_1$-useless sets of coordinates. 
    By applying Lemma \ref{lem:partial_assignment_induc} to $(P^1_i)_{i\in [k]}$ and $(Z^1_{i,j})_{i,j \in [k]}$, we have that, denoting $R^1_i = P^1_i \cap R$:
    \begin{itemize}
        \item Either $Z_{i,j} = I_i$ for one pair $i, j \in [k]$ such that $I_i \not = \emptyset$ and $I_j = \emptyset$; or
        \item There exists $i, j\in [k]$ such that $I_i$ and $I_j$ are non empty and $f^{i^j_1}_2(P^1_j, u_i) - f^{i^j_1}_2(P^1_j, c'_j) \leq t_1\cdot f_2(R^1_i, u_i)$ for some index $i^j_1 \in I_i - I_j$; or
        \item There is an algorithm that returns for every $i \in [k]$ a set $T'_i$ such that $T'_i \subseteq \fd(R^1_i, I_i)$ and $|\pd(P^1_j, I_i - I_j) - T'_i| \geq \frac{t_1}{32} |\fd(R^1_i, I_i) - T'_i|$ for some $j \in [k]$ with probability at least $(\frac{1}{\log(n)})^k$; or 
        \item There exists an extension $(P^2_i)_{i \in [k]}$ of $\mathcal{P}$ with some compatible $5t_1+t_1^2$-useless set of coordinates $(Z^2_{i,j})_{i,j}$ such that the value of $(P^2_i)_{i \in [k]}$ is at most $(1 + \frac{3t_1+t_1^2}{2}) OPT(\mathcal{P})\le (1+5t_1+t_1^2)OPT(\mathcal{P})$ and $\sum_{i,j \in [k]}|Z^2_{i,j}| >\sum_{i,j \in [k]}|Z^1_{i,j}|$.
    \end{itemize}

    In the first case, the partial clustering $\mathcal{P}'$ obtained from $\mathcal{P}$ by setting $I_j := I_i$, $u_j := u_i$, and $n_j := n_i$, has value at most \[ OPT(\mathcal{P}) + \left(f^{Z_{i,j}}_2(P'_j, u_i) - f^{i^j_1}_2(P'_j, c'_j) \right) \leq  (1 + t_1) OPT(\mathcal{P}) \] by definition of $t_1$-useless sets of coordinates, and therefore $\mathcal{P}'$ satisfies the first property of Lemma \ref{lem:partial_assignment}.
    In the second case, the partial clustering  $\mathcal{P}'$ obtained from $\mathcal{P}$ by setting $I_j := I_j \cup i^j_1$, $(u_j)_{i^j_1} := (u_i)_{i^j_1}$ and $n_j := n_j +1$ is a partial clustering of value at most \[ OPT(\mathcal{P}) + \left(f^{i^j_1}_2(P'_j, u_i) - f^{i^j_1}_2(P'_j, c'_j)\right)\leq (1 + t_1) OPT(\mathcal{P}),\] and therefore $\mathcal{P}'$ satisfies the first property of Lemma \ref{lem:partial_assignment}.
    In the third case, then the set $B = \bigcup_{i \in [k]} T'_i$ satisfies the second property of Lemma \ref{lem:partial_assignment}.

    In the last case, we can again apply Lemma \ref{lem:partial_assignment_induc} to $(P^2_i)_{i\in [k]}$ and $(Z^2_{i,j})_{i,j \in [k]}$ with $t_2 = 5t_1+t_1^2$. By repeating this process until we arrive to an application of Lemma \ref{lem:partial_assignment_induc} where one of the first three cases is satisfied, we can define a sequence of extensions $(P^s_i)_{i\in [k]}$, $(Z^s_{i,j})_{i,j \in [k]}$ and $t_s = 5t_{s-1}+t_{s-1}^2$ such that the $(Z^s_{i,j})_{i,j \in [k]}$ are compatible $t_s$-useless sets of coordinates for $(P^s_i)_{i\in [k]}$. Note that if $t_{s-1}\le1$, then $t_s\le 6t_{s-1}$ and recall that $\alpha < 1$. Moreover, this process has to stop after at most $(\Delta+1)k$ iterations, because $\sum_{i,j \in [k]}|Z^{s+1}_{i,j}| > \sum_{i,j \in [k]}|Z^{s}_{i,j}|$, and if $Z^r_{i,j}$ increases more than $\Delta$ times, then $|Z^r_{i,j}|= d$, and we are in the first case. 
    Therefore, by the choice of $t_1$, there exists a safe extension $(P'_i)_{i\in [k]}$ with compatible $\alpha$-useless sets $(Z'_{i,j})_{i, j\in [k]}$ such that the application of Lemma \ref{lem:partial_assignment_induc} falls into one of the first three cases, and we can conclude.

    The desired algorithm proceeds as follows. First, it guesses with probability $1/3$ in which of the above cases it falls. In the first case, guessing the pair $i,j$ allows us to conclude with probability $1/k^2$. In the second, guessing $i,j$ and $i^j_1 \in I_i - I_j$ allows us to conclude with probability $\frac{1}{k^2\Delta}$ (remember that $|I_i - I_j| \leq \Delta$). Finally in the last case, the algorithm succeeds if the algorithm of Lemma \ref{lem:partial_assignment_induc} succeeds, so the probability of that is $(\frac{1}{\log(n)})^k$. 

    Overall, the probability of success is at least $\frac{1}{3k^2\Delta\log(n)^k}$, and the algorithm runs in time $\Oh(nkd)$.   
\end{proof}

\section{The Algorithm}\label{sec:algo}

Now that we have Lemma \ref{lem:partial_assignment}, we can describe our algorithm. 
Let us first recall the following lemma, which is a direct consequence of the definition of variance (see Lemma 1 of Inaba et al.~\cite{inaba1994applications}). 

\begin{lemma}\label{lem:variance}
	Let $x_1, \dots, x_m$ be a set of reals with average $\mu$ and $s_1, \dots, s_t$ be a set of elements obtained by $t$ independent and uniform draws among the $x_i$.  The random variable $s = \sum_{i \in t} s_i/t$ is such that $\mathbb{E}(|s-\mu|^2) \leq \frac{\sum_{i \in [m]}|x_i - \mu|^2 }{tm}$.   
\end{lemma}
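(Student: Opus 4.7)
The plan is to prove this via a direct variance computation, noting that the inequality is in fact an equality. First I would observe that each draw $s_i$ is a random variable uniformly distributed on $\{x_1, \dots, x_m\}$, so by linearity of expectation $\mathbb{E}(s_i) = \frac{1}{m}\sum_{j \in [m]} x_j = \mu$, and therefore $\mathbb{E}(s) = \mu$ as well. Consequently, $\mathbb{E}(|s - \mu|^2) = \mathrm{Var}(s)$.

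Next I would use the independence of the draws to decompose the variance. Since $s = \frac{1}{t}\sum_{i \in [t]} s_i$ and the $s_i$ are mutually independent, we have
\[
\mathrm{Var}(s) = \frac{1}{t^2}\sum_{i \in [t]} \mathrm{Var}(s_i) = \frac{\mathrm{Var}(s_1)}{t},
\]
where the last equality uses that the $s_i$ are identically distributed. Then I would compute the variance of a single draw directly from the definition of variance of a uniform distribution on the $x_j$, namely
\[
\mathrm{Var}(s_1) = \mathbb{E}(|s_1 - \mu|^2) = \frac{1}{m}\sum_{j \in [m]} |x_j - \mu|^2.
\]
Combining the two displayed equalities yields $\mathbb{E}(|s - \mu|^2) = \frac{\sum_{j \in [m]} |x_j - \mu|^2}{tm}$, which gives the claimed inequality (in fact with equality).

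There is no real obstacle here; this is the standard textbook fact that the sample mean of $t$ i.i.d.\ draws has variance equal to $1/t$ times the population variance. The only subtlety is making sure to invoke independence when splitting the variance of the sum.
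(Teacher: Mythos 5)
Your proof is correct and matches what the paper intends: the paper does not prove this lemma itself but cites it as ``a direct consequence of the definition of variance'' (Lemma 1 of Inaba et al.), and your computation --- $\mathbb{E}(s)=\mu$, then $\mathrm{Var}(s)=\mathrm{Var}(s_1)/t$ by independence, then $\mathrm{Var}(s_1)=\frac{1}{m}\sum_{j}|x_j-\mu|^2$ --- is exactly that standard argument, showing the bound holds with equality.
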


The main element of our proof is the following lemma providing one step of the algorithm.

\begin{lemma}\label{lem:algorithm_step}
    For every constant $\alpha\in \mathbb{R}^+$, there exists an algorithm 
    that, given a partial clustering $\mathcal{P} = \{ (n_i)_{i \in [k]}, (H_i)_{i \in [k]}, (I_i)_{i \in [k]}, (u_i)_{i \in [k]}  \}$
    outputs in time $\Oh(nkd)$ with probability at least 
    \[
    \min\{\frac{1}{2^{\Oh(\frac{\Delta^3k}{\alpha}\log\frac{1}{\alpha})}(\log n)^k}, \frac{1}{2^{\Oh({\Delta^6k}\log\Delta)}(\log n)^k}\}
    \]
     a partial clustering $\mathcal{P}' = \{ (n'_i)_{i \in [k]}, (H'_i)_{i \in [k]}, (I'_i)_{i \in [k]}, (u'_i)_{i \in [k]}  \}$ such that $\sum_{i \in [k]}n'_i > \sum_{i \in [k]}n_i$ and $OPT(\mathcal{P'}) \leq (1 + \alpha )OPT(\mathcal{P})$.

\end{lemma}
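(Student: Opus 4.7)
The plan is to apply Lemma \ref{lem:partial_assignment} with a rescaled parameter $\alpha' = \Theta(\alpha)$ and then, depending on its two alternatives, either return its output directly or perform a single sampling step analyzed via Lemma \ref{lem:variance}. If the first alternative of Lemma \ref{lem:partial_assignment} occurs, then the returned $\mathcal{P}'$ already satisfies $OPT(\mathcal{P}') \leq (1 + \alpha') OPT(\mathcal{P}) \leq (1+\alpha) OPT(\mathcal{P})$ together with $\sum_{i \in [k]} n'_i > \sum_{i \in [k]} n_i$, so the algorithm simply outputs it; the remaining work concerns the second alternative.

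In the second alternative we are given a set $B \subseteq \bigcup_i \fd(P'_i, I_i)$ and an extension $(P'_i)_{i\in[k]}$ of value at most $(1+\alpha') OPT(\mathcal{P})$ such that for every $i$ some $j$ satisfies $|\pd(P'_j \cap R, I_i - I_j)| \geq C \cdot |\fd(P'_i \cap R, I_i) - B|$ with $C = \alpha'/(32\cdot 6^{(\Delta+1)k})$. A short pigeonhole argument---decomposing $R - B$ into fully- and partially-defined parts per cluster, bounding each fully-defined remainder via the guarantee, and summing over $i$---produces an index $r$ with $|\pd(P'_r \cap R, [d] - I_r)| \geq \tfrac{C}{2k} |R - B|$. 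The algorithm guesses $r$ (factor $1/k$), draws $s = \Theta(\Delta/\alpha)$ uniform samples from $R - B$, and with probability at least $(C/(2k))^s$ all of them land in $\pd(P'_r \cap R, [d] - I_r) \subseteq P'_r \cap R$. Let $u^\star$ be the coordinatewise mean of these samples; set $u'_r$ to agree with $u_r$ on $I_r$ and with $u^\star$ on $\dom(u^\star) - I_r$, $I'_r = I_r \cup \dom(u^\star)$, and $n'_r = n_r + 1$, leaving the rest of $\mathcal{P}'$ equal to $\mathcal{P}$. Because each sample is a $\Delta$-point, the intersection of the samples' missing sets has size at most $\Delta$, so $|\dom(u^\star)| \geq d - \Delta$; combined with the fact that every good sample is partially defined outside $I_r$ when $n_r > 0$, one verifies the required size condition $|I'_r| \geq d - \Delta + (n'_r - 1)$.

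To bound $OPT(\mathcal{P}')$, apply Lemma \ref{lem:variance} coordinatewise: for each $c \in \dom(u^\star) - I_r$, conditioning on the subset $S_c$ of samples defining $c$ yields a uniform sample of $\pd(R'_r, c)$, so $\mathbb{E}[(u^\star_c - (c'_r)_c)^2] \leq f^{\{c\}}_2(R'_r, c'_r)/(|S_c| \cdot |\pd(R'_r, c)|)$. Summing over $c$ via Lemma \ref{lem:approx_center_1}, choosing $s$ so that the relevant $|S_c|$'s are $\Omega(s/\Delta)$, and applying Markov's inequality yields extra cost at most $\Theta(\alpha) \cdot OPT(\mathcal{P})$ with constant probability; combined with the $(1+\alpha')$-factor already carried by the extension and a final rescaling of $\alpha'$, this gives $OPT(\mathcal{P}') \leq (1+\alpha) OPT(\mathcal{P})$. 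Multiplying the probabilities from Lemma \ref{lem:partial_assignment}, the guess of $r$, the all-samples-good event, and the Markov step produces the first bound in the statement; the second bound follows from the same scheme with the random sample replaced by an exhaustive enumeration whose cardinality depends only on $\Delta$. The main obstacle is handling coordinates $c$ for which $|S_c|$ is small and the naive Inaba estimate fails: the fix is to invoke the ``useless coordinate'' machinery of Lemmas \ref{lem:useless_moving} and \ref{lem:useless_next} to argue that such coordinates contribute negligibly to the true cost, or would already have been absorbed into $B$ during the preceding application of Lemma \ref{lem:partial_assignment}.
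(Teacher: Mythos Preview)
Your high-level plan is right --- apply Lemma \ref{lem:partial_assignment}, pigeonhole to find a cluster $r$ with many ``useful'' points in $R-B$, sample, and use Lemma \ref{lem:variance} coordinatewise --- but two steps are genuinely broken.

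First, your treatment of coordinates $c$ with small $|S_c|$ does not work. Lemmas \ref{lem:useless_moving} and \ref{lem:useless_next} concern the structure of extensions and $t$-useless sets $Z_{i,j}$; they are consumed \emph{inside} the proof of Lemma \ref{lem:partial_assignment} and say nothing about the variance of a sample mean at a sparsely-represented coordinate. A coordinate can be far from ``useless'' in that sense and still have very few points of $R'_r$ defined on it, so your proposed fix is a non sequitur. The paper handles this by splitting into two cases. When $I_r = \emptyset$, it first identifies a set $L_r$ of at most $\Delta$ ``bad'' coordinates (those where a $(1-\delta)$-fraction of $R'_r$ is undefined, $\delta = 1/(2\Delta)$), samples one point of $F_r$ to fix a target set $J_r$ with $|J_r|\ge d-\Delta$ and $J_r\cap L_r=\emptyset$, and only then draws the main sample of size $t=8/(q\delta)$; on every coordinate of $J_r$ one now has $p_i \geq \delta/2$, and a Hoeffding tail bound together with the parameter choice $q \leq 1/(128\Delta^3)$ forces $\exp(-tp_i^2/4) \leq 2/(tp_i)$, which is exactly what makes Claim \ref{clm:proba1} go through. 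For the at most $\Delta$ coordinates of $J_r$ where no main sample happens to be defined, it draws \emph{separate} targeted samples. When $|I_r| \geq d-\Delta$, the paper does something much simpler: it guesses a single coordinate $j \in [d]-I_r$ (probability $1/\Delta$) and samples $t'=2/q$ points conditioned on being defined at $j$, increasing $|I_r|$ by exactly one. Your unified scheme collapses both cases and leaves the sparse-coordinate problem unresolved.

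Second, your explanation of the second probability bound is incorrect: there is no ``exhaustive enumeration whose cardinality depends only on $\Delta$''. Both bounds come from the \emph{same} randomized algorithm; the dichotomy is purely in the parameter $q = \min\{\alpha/3,\,1/(128\Delta^3)\}$. When $\alpha$ is small the $\alpha/3$ branch is active and the success probability is $2^{-\Oh(\Delta^3 k/\alpha \cdot \log(1/\alpha))}(\log n)^{-k}$; when $\alpha$ is large relative to $1/\Delta^3$, $q$ is pinned at $1/(128\Delta^3)$ regardless of $\alpha$, and the resulting bound $2^{-\Oh(\Delta^6 k \log \Delta)}(\log n)^{-k}$ depends only on $\Delta$ and $k$.
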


\begin{proof}
    Let us fix  sufficiently small $q\in \mathbb{R}^+$ such that $(1+q)^2 \leq (1 + \alpha)$ and $\exp(-\frac{1}{4\Delta q})\le \frac{q}{4\Delta}$.
    The choice of $q$ will be clear later in the proof. For now just notice that $q<1$ and setting $q=\min\{\frac{\alpha}{3}, \frac{1}{128\Delta^3}\}$ satisfies both conditions. 
    Additionally,  let $q' = \frac{q}{32\cdot6^{(\Delta+1)k}}$. 
    By applying Lemma \ref{lem:partial_assignment} with the constant $q$, we have an algorithm that runs in time $\Oh(nd)$ and with probability at least $(\frac{1}{2 \log(n)})^{k}$ returns either: 
    
    \begin{itemize}
        \item A partial clustering $\mathcal{P}' = \{ (n'_i)_{i \in [k]}, (H'_i)_{i \in [k]}, (I'_i)_{i \in [k]}, (u'_i)_{i \in [k]}  \}$ with $OPT(\mathcal{P}') \leq (1 + q)OPT(\mathcal{P})$ and $\sum_{i \in [k]}n'_i > \sum_{i \in [k]}n_i$; or
        \item a set $B$ of elements of $R$ such that there exists an extension $(P'_i)_{i \in [k]}$ of $\mathcal{P}$ with value at most $(1+q)OPT(\mathcal{P})$ and such that $B \subseteq \bigcup_{i \in [k]}\fd(P'_i \cap R, I_i)$ and for every $i \in [k]$, there is an index $j \in [k]$ such that $ |\pd(P'_j \cap R,I_i - I_j) - B| \geq q' |\fd(P'_i \cap R, I_i)  -B|. $
    \end{itemize} 

    In the former case, nothing needs to be done as $\mathcal{P}'$ satisfies all the properties of the lemma. Therefore, from now on we assume that we are in the latter case and we are given a set $B\subseteq R$ satisfying all the conditions of the second case of Lemma~\ref{lem:partial_assignment}. Let $(P'_i)_{i \in [k]}$ be the hypothetical extension with value at most $(1+q)OPT(\mathcal{P})$ whose existence is guaranteed.
    Let $R'_i  = P'_i \cap (R - B)$ for all $i \in [k]$ and let $(c'_i)_{i \in [k]}$ be the centers associated with $(P'_i)_{i \in [k]}$. Let $i$ be the index such that $|R'_i|$ is maximal, meaning $|R'_i| \geq |R- B|/k$. Now either $|\fd(R'_i, I_i)| \geq |R'_i|/2$, in which case we know that there exists an index $j \in [k]$ such that $|\pd(R'_j,I_i - I_j)| \geq q' |\fd(R'_i, I_i)| \geq \frac{q'|R-B|}{2k}$, or $|R'_i -\fd(R'_i, I_i)| \geq \frac{|R-B|}{2k}$. Note that $\pd(R'_j,I_i - I_j) \cap \fd(R'_j, I_j) = \emptyset$ which means that there exists an index $r \in \{i,j\}$ 
    such that $|R'_r -\fd(R'_r, I_r)| \geq \frac{q'|R-B|}{2k}$. 
    The goal of the algorithm now will be to sample points inside $R'_r -\fd(R'_r, I_r)$ in order to obtain a good estimate of some coordinates of $c'_r$ that are not yet in $I_r$. We will consider two different cases, depending on whether $|I_r| \geq d - \Delta$ or not. 
    
\paragraph*{Case 1: $|I_r| < d- \Delta$.}
Note that in this case, by the definition of a partial clustering, $n_r = 0$ and $H_r \cup \fd(R'_r, I_r)$ is empty. This implies that $R'_r = P'_r$.
Let $\delta = \frac{1}{2\Delta}$
and note that $(1 -\delta)^\Delta \geq 1/2$.
We claim that there exists a set $L_r$ of at most $\Delta$ coordinates (possibly empty) such that there exists a set $F_r \subseteq R'_r$, $|F_r|\geq |R'_r|/2$ where every element $x$ of $F_r$ is such that $x_j =\ \?$ on every $j \in L_r$ and for every $i \in [d] - L_r$, there are at most $(1- \delta)|F_r|$ points $x$ in $F_r$ with $x_i =\ \?$. 
We can obtain the set $F_r$ from $R'_r$ as follows. Start with $L_r = \emptyset$ and $F_r = R'_r$. As long as there exists a coordinate $i \in [d]- L_r$ where a  $(1 - \delta)$ fraction of points in $F_r$ has value $\?$ on the coordinate $i$, then set $L_r:= L_r \cup \{i\}$ and $F_r$ as the set of points of $F_r$ with value $\?$ on the coordinate $i$. This process has to stop after $\Delta $ steps as $P$ consists of $\Delta$-points. This means that, at the end,  $|F_r| \geq (1- \delta)^{\Delta}|R'_r|$, which ends the proof of the claim. For $i \in L_r$, let $R^i_r = \pd(R'_r, i)$ and $p_i = |R^i_r|/|R'_r|$. The previous discussion shows that $p_i \geq \frac{\delta}{2}$ for all $i \not \in L_r$. 

Intuitively, $L_r$ correspond to the set of coordinates such that, if we sample in $R'_r$, then we might not get a good estimate for $c'_r$ on these coordinates. Suppose now we pick uniformly at random an element $x$ of $R - B$. Because $|F_r| \geq \frac{|R'_r|}{2} \geq \frac{q'|R- B|}{4k}$, with probability at least $\frac{q'}{4k}$, $x \in F_r$. Assume from now on this is the case, and let $J_r = \dom(x)$. Note that $L_r \subseteq [d] - J_r$ and $|J_r|\ge d - \Delta$. 

Let $t  = \frac{8}{q\delta}$. From the choice of $q$ and because $p_i\ge \frac{\delta}{2}$ one can show that $exp(-tp_i^2/4) \leq \frac{2}{tp_i}$. Consider $X = \{x_1, \dots, x_t\}$,  a (multi)set of $t$ elements in $R-B$ obtained by doing $t$ independent and uniform draws. With probability at least $( \frac{q'}{2k})^{t}$, all these points belong to $R'_r$. From now on we assume this is the case, and all the probabilities computed will be conditioned by that fact. Note that the set $\{x_1, \dots, x_t\}$ then follows exactly the same distribution as the one obtained by doing $t$ independent and uniform draws in $R'_r$. Let $J'_r$ be the set of coordinates $e$ of $J_r$ for which there exists an element $x_s \in X$ such that $(x_s)_e \ne\ \?$, note that $|J_r - J'_r| \leq \Delta$. Let $u' = c^{J'_r}(X)$. For every $i \in J'_r$, denote by $X_i$ the random variable counting the number of the points $x_j$ with $j \in [t]$ such that $(x_j)_i \ne\ \?$. Note that $X_i$ follows the binomial distribution with parameters $t$ and $p_i$. By using standard tail bounds of the binomial distribution (see, e.g., Theorem 1 of \cite{doi:10.1080/01621459.1963.10500830}) we can show the following claim: 

\begin{claim}
    For every $i \in J'_r$, $Pr[X_i \leq tp_i/2] \leq exp(-tp_i^2/4)$.
\end{claim}

Moreover, if we condition by the event that $X_i = p$, then the distribution followed by the $p$ values of $(x_s)_i$ is exactly the same as the one obtained by doing $p$ uniform and random draws among all the $v_i$ for $v \in \pd(R'_r, i)$. This implies the following result. 

\begin{claim}\label{clm:proba1}
    For every $i \in J'_r$, we have $\mathbb{E}\left(|(c'_r)_i - (u')_i|^2\right) \leq \frac{\sum_{x \in R^i_r}|x_i - (c'_r)_i|^2}{|R^i_r|} \cdot \frac{4}{tp_i}$.
\end{claim}

\begin{cProof}
    By the definition of a partial clustering and an extension, and because $I_r$ is empty, we know that $(c'_r)_i$ is equal to the average of $s_i$ over all elements $s$ of $R^i_r$. By applying Lemma \ref{lem:variance}, we get that, for every $s \leq t$, \[\mathbb{E} \left( |(c'_r)_i - u'_i|^2 | X_i = s \right) \leq \frac{\sum_{x \in R^i_r}|x_i - (c_r)_i|^2}{|R^i_r|s}. \] 
    This means that: 
    \begin{align*}
        \mathbb{E} \left( |(c'_r)_i - u'_i|^2\right) &=  \mathbb{E} \left( |(c'_r)_i - u'_i|^2 \mid X_i \leq tp_i/2 \right) Pr[X_i \leq tp_i/2] \\ &+ \mathbb{E} \left( |(c'_r)_i - u'_i|^2 \mid X_i > tp_i/2 \right) Pr[X_i > tp_i/2] \\
        &\leq \frac{\sum_{x \in R^i_r}|x_i - (c_r)_i|^2}{|R^i_r|} \cdot Pr[X_i \leq tp_i/2]\\
        &+ \frac{\sum_{x \in R^i_r}|x_i - (c_r)_i|^2}{|R^i_r|tp_i/2}  \\
        & \leq \frac{\sum_{x \in R^i_r}|x_i - (c_r)_i|^2}{|R^i_r|}\left( exp(-tp_i^2/4)+ \frac{2}{tp_i}\right).
    \end{align*}
    Which ends the proof as we chose $q$ such that $exp(-tp_i^2/4) \leq \frac{2}{tp_i}$. 
\end{cProof}

Now for every index $a \in J_r - J'_r$, consider $A_a = \{ x^a_1, \dots, x^a_t \}$ a new (multi)set of $t$ elements in $R-B$ obtained by doing uniform and independent draws. With probability at least $(\frac{q'\delta}{4k})^t$,
all these elements belong to $R^a_r$. From now on, let us assume that it is the case for every $a \in J_r - J'_r$. Setting $u'_a = c^a(A_a)$, an analogous proof as the one of Claim \ref{clm:proba1} would yield: 

\begin{claim}\label{clm:proba2}
    For every $a \in (J_r - J'_r)$, we have $\mathbb{E}\left(|(c'_r)_a - u'_a|^2\right) \leq \frac{\sum_{x \in R^a_r}|x_a - (c'_r)_i|^2}{|R^a_r|} \cdot \frac{4}{tp_a} $.
\end{claim}

By summing over all coordinates of $J_r$, we obtain the following result. 

\begin{claim}\label{cl:proba_1}
    With probability at least $1/2$, 
    $f^{J_r}_2(R'_r, u') \leq ( 1 + \frac{8}{tp_i}) (f^{J_1}_2(R_r, c'_r))$.
\end{claim}

\begin{cProof}
    Indeed, $\mathbb{E}(f^{J_r}_2(R'_r, u') - f^{J_r}_2(R_r, c'_r) ) =  \sum_{i \in J_r}|R^i_r| \cdot \mathbb{E}(|(c'_r)_i - u'_i|^2)$ by Lemma \ref{lem:approx_center_1}, which is smaller than $\frac{4}{tp_i} (f^{J_r}_2(R_r, c'_1)) $ by the previous claims. Markov's inequality allows us to conclude. 
\end{cProof}

Therefore, by choosing the following at random: 
\begin{itemize}
    \item an index $r \in [k]$ such that $|R'_r -\fd(R'_r, I_r)| \geq \frac{q'|R-B|}{2k}$,
    \item an element $x \in R-B$,
    \item $t$ elements $x_1, \dots, x_t$ in $R-B$, and 
    \item $t$ elements $x^a_1, \dots, x^a_1$ in $R-B$ for each $a \in J_r - J'_r$ 
\end{itemize}
we find, with probability at least $\frac{1}{k} \cdot \frac{q'}{4k} \cdot( \frac{q'}{2k})^{(t-1)}\cdot (\frac{q'\delta}{4k})^{t\Delta} \cdot 1/2  \geq \frac{(q'\delta)^{(\Delta+1) t}}{(4k)^{(\Delta+1) t +1}}=\frac{1}{2^{\Oh(\frac{\Delta^3k}{q}\log\frac{1}{q})}}$, a point $u_1 \in \mathbb{H}^d$ such that $f^{J_r}_2(R'_r, u') \leq ( 1 + \frac{8}{tp_i}) (f^{J_r}_2(R_r, c'_r))$. Now consider the set of points $(c_i)_{i \in [k]}$ defined as follows: if $i \not = r$, then $c_i = c'_i$ and $c_r$ is the point obtained from $c'_r$ by setting $(c_r)_j = u'_j$ on all the coordinates of $J_r$. We have that: 
\begin{align*}
\val((P'_i)_{i \in [k]}, (c_i)_{i \in [k]} ) &\leq \val((P'_i)_{i \in [k]}, (c'_i)_{i \in [k]} ) + \frac{8}{tp_i} (f^{J_1}_2(R_r, c'_r))\\
&\leq (1+q)^2OPT(\mathcal{P})\\
&\leq (1 + \alpha)OPT(\mathcal{P}).
\end{align*}

Therefore, it means that the partial clustering $\mathcal{P}'$ obtained from $\mathcal{P}$ by setting $n_r = 1$, $u_r = u'$ and $I_r = J_r$ satisfies the property of the lemma. 
Indeed, the partition $(P'_i)_{i \in [k]}$ is an extension of $\mathcal{P}'$ with value at most $\val\left((P'_i)_{i \in [k]}, (c_i)_{i \in [k]} \right)$ and $|J_r| \geq d - \Delta$.

\paragraph{Case 2: $|I_r| \geq d- \Delta$.}
Let $S  = [d] -I_r$ and note that, by the definition of $\fd(R'_r, I_r)$, for every element $y \in R'_r -\fd(R'_r, I_r)$, there is an index $ j \in S$ such that $j \in \dom(y)$. In particular it means that there exists an index $j \in S$ such that $j \in \dom(y)$ for at least $|R'_r -\fd(R'_r, I_r)|/\Delta$ of the elements of $R'_r -\fd(R'_r, I_r)$. Because $|S| \leq \Delta$, by picking a random element of $S$, with probability at least $1/\Delta$, we can assume that we know this index $j$. Our main goal now will be to guess $(c'_r)_j$. Let $t' =\frac{2}{q}$ and suppose that $s_1, \dots, s_{t'}$ is a (multi)set of elements of $R-B$ obtained by $t'$ uniform and independent draws. With probability at least $(\frac{q'}{2k\Delta})^{t'}$, all the $s_i$'s belong to $\pd(R'_r,j)$. From now on, let us assume that this is the case. Note that in this case, the random set $(s_1, \dots, s_{t'})$ follows the same distribution as one obtained by doing $t$ uniform and independent draws in $\pd(R'_r,j)$. Let $a_j = \sum_{i \in [t]}(s_i)_j/t'$, a proof similar to the one of Claim~\ref{cl:proba_1} gives the following claim: 

\begin{claim}
With probability at least $1/2$, $f^{j}_2(R'_r, a_j) \leq ( 1 + q) (f^{j}_2(R'_r, c'_r))$.
\end{claim} 

In that case, the partial clustering $\mathcal{P'}$ obtained from $\mathcal{P}$ by setting $(u_r)_{j} = a_j$ and $n_r := n_r +1$ satisfies the desired properties. 

By considering both cases, we obtain an algorithm running in time $\Oh(knd)$ and succeeding with probability at least the probability that the algorithm of Lemma \ref{lem:partial_assignment} succeeds times the minimum of
$\frac{(q'\delta)^{(\Delta+1) t}}{(4k)^{(\Delta+1) t +1}}$ and $\frac{1}{\Delta}\cdot(\frac{q'}{2k\Delta})^{t'}$, which is at least 
\[\frac{1}{2^{\Oh(\frac{\Delta^3k}{q}\log\frac{1}{q})}(\log n)^k}.\]

Note that if $\alpha$ is sufficiently small,
then this is $\frac{1}{2^{\Oh(\frac{\Delta^3k}{\alpha}\log\frac{1}{\alpha})}(\log n)^k}$ and else it is $\frac{1}{2^{\Oh({\Delta^6k}\log\Delta)}(\log n)^k}$, finishing the proof.
\end{proof}

Finally, by applying Lemma \ref{lem:algorithm_step} at most $k(\Delta +1)$ times we obtain Theorem \ref{th:kmean_FPTAS}.

\mainTheorem*
\begin{proof}
    Let $P$ be a instance of the {\sc $k$-means clustering} problem consisting of $\Delta$-points.  
    Fix $\alpha = ((1 + \epsilon)^{1/k(\Delta + 1)} -1 )$, note that $\alpha \ge \frac{\epsilon}{3k(\Delta+1)}$, and let $\mathcal{P} = \{ (n_i)_{i \in [k]}, (H_i)_{i \in [k]}, (I_i)_{i \in [k]}, (u_i)_{i \in [k]}  \}$ be the partial clustering such that for each $i\in [k]$, $n_i = 0$, $H_i = \emptyset$, $I_i = \emptyset$ and $u_i$ is the point of $\mathbb{H}^d$ with only \?. Note that $OPT(\mathcal{P})$ is equal to the optimal value of the instance. 

    The algorithm consists of applying inductively Lemma \ref{lem:algorithm_step} with the constant $\alpha$ and Observation~\ref{obs:1} until $\bigcup_{i \in [k]} (H_i) = P$. Since $\sum_{i \in [k]} n_i$ increases in every application of Lemma~ \ref{lem:algorithm_step}, we get by Observation \ref{obs:size_ni} that this process stops after at most $k(\Delta +1)$ steps. The probability that all the steps succeed is at least $(\frac{g(\alpha, k, \Delta)}{\log(n)^k})^{k(\Delta +1)}$, where $g(\alpha, k, \Delta)=\min\{\frac{1}{2^{\Oh(\frac{\Delta^3k}{\alpha}\log\frac{1}{\alpha})}}, \frac{1}{2^{\Oh({\Delta^6k}\log\Delta)}}\}$, and if it holds then the partial clustering $\mathcal{P'} = \{ (n'_i)_{i \in [k]}, (H'_i)_{i \in [k]}, (I'_i)_{i \in [k]}, (u'_i)_{i \in [k]}  \}$ obtained at the end is such that $OPT(\mathcal{P'} ) \leq (1+ \alpha)^{k(\Delta +1)}$. Since $\bigcup_{i \in [k]} (H_i) = P$, this gives us indeed a $(1+\epsilon)$ approximation. Note that we can obtain a center for $H_i$ simply by either taking $u_i$ or computing $c(H_i)$. The running time is $\Oh(k^2\Delta nd)$ and the probability of success is at least $(\frac{g(\alpha, k, \Delta)}{\log(n)^k})^{k(\Delta +1)}$, which means that running the previous algorithm $\Oh((\frac{\log(n)^k}{g(\alpha, k, \Delta)})^{k(\Delta +1)})$
    times allows us to find the approximation with constant probability. 
    Finally, it is well-known that for any constant $C$, $\log(n)^C = n + C^{\Oh(C)}$\ee{Maybe add a citation? I can find one later..}{} which gives the total running time of: 
    \[
    \max\{{2^{\Oh(\frac{\Delta^3k}{\alpha}\log\frac{1}{\alpha})}}, {2^{\Oh({\Delta^6k}\log\Delta)}}\}^{k(\Delta +1)}2^{\Oh(k^2\Delta)\log(k\Delta)}k^2\Delta n^2d
    \]
    which can be simplified to 
    \[
     \max\{{2^{\Oh(\frac{\Delta^5k^3}{\epsilon}\log\frac{k\Delta}{\epsilon})}}, {2^{\Oh({\Delta^7k^2}\log(k\Delta))}}\} n^2d,
    \]
    finishing the proof. 
\end{proof}

\section{Concluding Remarks}

In this paper we gave the first PTAS for $k$-means clustering of $\Delta$-points when $\Delta>1$ running in time ${2^{O(\frac{\Delta^7k^3}{\epsilon}\log\frac{k\Delta}{\epsilon})}} n^2d$ based on iteratively sampling points to discover new coordinates of some center. We believe that the study of clustering problems of $\Delta$-points is an interesting research direction and there is still a lot to be discovered. 
We conclude with concrete open questions. 

Arguable, the most popular clustering objectives are $k$-center, $k$-means, and $k$-median. 
For $k$-center clustering of $\Delta$-points a PTAS was obtained by 
  Lee and  Schulman in \cite{LeeS13}. However, for $k$-median clustering of $\Delta$-points, the question whether the problem admits  a PTAS,  remains open. 
 We would like to remark here that the algorithm of Kumar et al.~\cite{KumarSS10} for clustering of points in $\mathbb{R}^d$ works not only for $k$-means, but also for $k$-medial clustering.
 
 Since $\Delta$-points are basically $\Delta$-dimensional axis-parallel subspaces, another interesting question would be whether it is possible to extend our result to clustering of arbitrary $\Delta$-dimensional affine subspaces in $\mathbb{R}^d$. This is a very natural computational geometry problem which complexity, to the best of our knowledge, is widely open. 
 
Following the coreset construction for $k$-means clustering of lines by Marom and Feldman~\cite{MaromF19}, it is a natural open question whether it is possible do design a coresets of small size for clustering of $\Delta$-points for $\Delta > 1$. For lines, the size of  coreset of 
Marom and Feldman is
$dk^{\Oh(k)} \log n/\epsilon^2$. In particular, whether $\log n$ can be removed even for $\Delta=1$,  is open.

Finally, we do not know  how tight is the running time of our algorithm. While it is plausible to suggest that  the dependency in $k$ and $\Delta$ is not optimal, to design a faster algorithm we need new ideas. It ls also an interesting open question whether one can improve the dependency on $n$ from quadratic to linear.

\end{document}